\pgfplotsset{compat = 1.9}
\newcommand{\mi}{\mathrm{i}}
\newcommand{\st}{\, | \,}
\newcommand{\be}{\begin{equation}}
\newcommand{\ee}{\end{equation}}
\newcommand{\R}{\mathbb R}
\newcommand{\C}{\mathbb C}
\newcommand{\diag}{\operatorname{diag}}
\newcommand{\nrank}{\operatorname{nrank}}
\newcommand{\rank}{\operatorname{rank}}
\newcommand{\ind}{\operatorname{index}}
\newcommand{\spanop}{\operatorname{span}}
\DeclareMathOperator{\vol}{vol}
\newtheorem{Theorem}{Theorem}
\newtheorem{Definition}{Definition}
\newtheorem{Proposition}[Theorem]{Proposition}
\newtheorem{Remark} {Remark}
\newtheorem{Example} {Example}
\begin{document}

\begin{frontmatter}
\title{The~$k$-Compound   of a Difference-Algebraic System}
\thanks[footnoteinfo]{
This work   was supported in part by a research grant  from the~ISF.}
 \author[First]{Ron Ofir}
 \author[Second]{Michael Margaliot}
\address[First] {Ron Ofir is with  
		the Andrew and Erna Viterbi Faculty of Electrical and Computer Engineering,
		Technion---Israel Institute of Technology, Haifa 3200003, Israel.}
\address[Second]{Michael Margaliot (Corresponding Author) is with the Department of  Elec. Eng.-Systems and the Sagol School of Neuroscience,  Tel-Aviv
 University, Tel-Aviv 69978, Israel. E-mail:  \texttt{michaelm@tauex.tau.ac.il}}

\maketitle

\begin{abstract}
The multiplicative and additive compounds of a matrix have important applications in geometry, linear algebra, and the analysis of dynamical systems.
In particular, the~$k$-compounds  allow to build a $k$-compound dynamical system that tracks the evolution of $k$-dimensional parallelotopes along the original dynamics. 
This has recently found many applications in the analysis of non-linear systems described by~ODEs and difference equations. 
Here, we introduce the $k$-compound system 
corresponding to a differential-algebraic system, and describe 
several applications to the analysis of discrete-time dynamical systems described by 
difference-algebraic equations. 
\end{abstract}

\begin{keyword}
Multiplicative compounds, Drazin inverse, evolution of volumes, wedge product. 
\end{keyword}

\end{frontmatter}

\section{Introduction}

There is a growing interest in the applications of compound matrices to systems and control theory~(see, e.g.~\cite{margaliot2019revisiting,eyal_omri_tutorial,rami_osci,rola_spect,CTPDS,  pines2021, DT_K_POSI, gruss1,gruss2,gruss3,gruss4,gruss5}). In particular,   $k$-compound matrices   have   been used to generalize important classes  of dynamical systems  leading to linear~$k$-positive and  non-linear $k$-cooperative   systems~\citep{Eyal_k_posi,9107214},
$k$-contracting systems~\citep{muldowney1990compound,kordercont,ofir2021sufficient}, and~$k$-diagonally stable systems~\citep{cheng_diag_stab}.

Here, we introduce the $k$-compound system of
the  difference-algebraic  equation~(DAE):
\begin{equation}\label{eq:DAE_TINV}
    B x(j+1) = A x(j),\quad j=0,1,\dots, 
\end{equation}
and also the~$k$-compound  of the corresponding 
matrix pencil.

Given matrices~$A,B\in \mathbb{C} ^{n\times n}$, the corresponding matrix pencil is the matrix polynomial 
\be\label{Eq:mtpen} 
(A,B) := A - \lambda B,\quad \lambda\in\C.
\ee
Matrix  pencils have numerous applications in linear algebra and in systems and control theory~(see e.g., the monographs~\citep{kunkel2006dae,milano2021eigenvalue,DAES2015}).
In particular, matrix pencils and their generalized  eigenvalues and eigenvectors
play an important role in the analysis of DAEs
(see~\citep{kunkel2006dae,gene_inverse}).

We show that the $k$-compound system of a~DAE describes the evolution of~$k$-parallelotopes under the~DAE. We   analyze properties such as consistency of initial  conditions, tractability, and stability of the $k$-compound system and relate them to the analogue properties in the original~DAE. We also present a result that relates the stability of the $k$-compound system to the existence of a stable subspace of the original~DAE. The dimension of this stable subspace 
decreases as~$k$ increases.

Several papers considered matrix pencils and used 
matrix compounds in their analysis~\citep{karcanias1994matrix,mitrouli1996numerical,kalogeropoulosnecessary,karcanias1989geometric,iwata2003}.
This is 
closely related to the Smith form of a matrix polynomial (see, e.g.,~\cite{mat_poly}), and also to the GCD of a given set of polynomials (see, e.g.~\cite{greek1989}).
However, to the
best of our knowledge the $k$-compound system of an DAE
that we introduce here, and its applications, are novel.

The remainder of this note  is organized as follows. 
The next section briefly reviews some known results that are used later on. 
Section~\ref{sec:kDAE} introduces 
the $k$-multiplicative compound system associated with a DAE. This is also a DAE, and its analysis, described in Section~\ref{sec:analysiskDAE}, 
is based on the so called   
$k$-multiplicative compound matrix pencil introduced in Section~\ref{sec:kpencil}. Section~\ref{sec:appli}
shows an application of these theoretical results by linking the stability of the~$k$-multiplicative compound system with that of the original~DAE. Our results use the 
Drazin inverse of the $k$-multiplicative compound of a matrix. For the sake of completeness, it is shown in the Appendix that this is equal to the $k$-multiplicative of the Drazin inverse of the original matrix.

We use standard  notation. Vectors [matrices] are denoted by small [capital] letters. 
A square matrix~$A$ is called regular [singular] if~$\det(A)\not =0$ [$\det(A)=0$].
The   complex conjugate transpose of~$A$ is denoted by~$A^*$.
If~$A$ is real then this is just the transpose of~$A$ denoted~$A^T$.
Given an integer~$ n\geq 1$  and~$k\in\{1,\dots,n\}$, let~$Q(k,n)$ denote  the list of  all $k$-tuples:~$\alpha_1<\dots<\alpha_k$, with~$\alpha_i \in\{1,\dots,n\}$, ordered lexicographically. For example, 
\be\label{eq:q34}
Q(3,4) = \left ( (1,2,3),(1,2,4),(1,3,4),(2,3,4) \right ).
\ee
We refer to the $i$th element of~$Q(k,n)$ as~$\alpha^i$ and use subscripts to refer to an entry, e.g. if $k=3$ and $n=4$, then $\alpha_1^4 = 2$.
Given $\alpha,\beta \in Q(k,n)$, let $A[\alpha|\beta]$ denote the submatrix of $A$ obtained by taking  the rows [columns] with indices in $\alpha$ [$\beta$]. For example, 
$
A[(1,2)|(1,3)]=\begin{bmatrix}
    a_{11} & a_{13} \\
    a_{21} & a_{23}
\end{bmatrix}.
$
Let~$A(\alpha|\beta) := \det(A[\alpha|\beta])$ denote the corresponding minor.
For square matrices~$A_i \in \C^{n_i\times n_i}$, $i=1,\dots,\ell$,
let~$\diag_{i\in \{1,\dots,\ell\}}(A_i)$    denote the~$(\sum_{i=1}^\ell n_i )\times( \sum_{i=1}^\ell n_i)$ diagonal block matrix with blocks~$A_1,  \dots,A_\ell$.
The~$n\times n$ identity matrix is denoted by~$I_n$. 

\section{Preliminaries}
To make this paper more self-contained, we first 
review  the multiplicative   compound
of a matrix (for more details see, e.g.,~\cite{muldowney1990compound,fiedler_book}), and    provide   a very brief overview of 
matrix pencils and their applications in~DAEs. 

\subsection{Multiplicative compound of a matrix}
Let~$A\in\mathbb{C}^{n \times m}$, and fix~$k\in\{1,\dots,\min\{n,m\}\}$. A $k$-minor of~$A$ is the determinant of a $k \times k$ submatrix of~$A$. The \emph{$k$-multiplicative compound} of~$A$, denoted~$A^{(k)}$, is   the~$\binom{n}{k}\times \binom{m}{k}$ matrix such that
$
    (A^{(k)})_{ij} = A(\alpha|\beta),
$
where~$\alpha$~[$\beta$] is the $i$th~[$j$th] sequence in~$Q(k,n)$.
For example, if $A \in \C^{3 \times 3}$ then  
\[A^{(2)}=
\begin{bmatrix}
A((1,2)| (1,2) ) &  A((1,2)| (1,3) ) & A((1,2)| (2,3) ) \\
A((1,3)| (1,2) ) &  A((1,3)| (1,3) ) & A((1,3)| (2,3) ) \\
A((2,3)| (1,2) ) &  A((2,3)| (1,3) ) & A((2,3)| (2,3) )
\end{bmatrix}.
\]

In particular,~$A^{(1)} = A$, and if~$m=n$ then~$A^{(n)} = \det(A)$.

The Cauchy-Binet theorem asserts that for any~$A\in\C^{n\times m}$, $B\in \C^{m\times \ell}$, and~$k\in\{1,\dots, \min\{n,m,\ell)\}$, we have
\be\label{eq:cb}
    (AB)^{(k)}=A^{(k)} B^{(k)}.
\ee
This justifies the term multiplicative compound.
When~$k=n=m=\ell$, \eqref{eq:cb} reduces to~$\det(AB)=\det(A)\det(B)$. 

 The definition of the multiplicative compound implies that~$(A^*)^{(k)}= (A^{(k)})^*$, and~$I_n^{(k)}=I_r$, with~$r:=\binom{n}{k}$. If~$A$ is square and regular  then~\eqref{eq:cb} gives
\begin{align*}
    I_n^{(k)} &= (A^{-1}A) ^{(k)} =(A A^{-1}) ^{(k)} \\
              &= (A^{-1})^{(k)} A ^{(k)} =A ^{(k)} (A^{-1})^{(k)} ,
\end{align*}
so~$A^{(k)}$ is also regular  and its inverse is~$(A^{-1})^{(k)}$. 
A similar argument shows that if~$U\in\C^{n\times n}$ is unitary, that is,~$U^*U= UU^*=I_n$,   then~$(U^{(k)})^* U^{(k)}= U^{(k)} (U^{(k)})^*=I_r$, so~$U^{(k)}$ is also unitary.

If $A$ is upper triangular (lower triangular, diagonal) 
then~$A^{(k)}$ is upper triangular (lower triangular, diagonal), and the diagonal entries of~$A^{(k)}$ are 
\begin{equation} \label{eq:akii}
    (A^{(k)})_{i,i} = \prod_{j=1}^k a_{{\alpha^i_j}, {\alpha^i_j}}
\end{equation}
where $\alpha^i$ is the $i$th sequence in $Q(k,n)$.
For example, for~$n=4$ and~$k=3$, 
$
\alpha^2=(1,2,4)$,  so~\eqref{eq:akii} becomes 
$
(A^{(3)})_{2,2} =  a_{11}a_{22}a_{44}.
$

If~$\lambda_i$, $i=1,\dots,n$, are the eigenvalues of~$A\in\mathbb{C}^{n \times n}$
then the eigenvalues of~$A^{(k)}$ are the $\binom{n}{k}$ products:
$
\prod_{i=1}^k \lambda_{\alpha_i}$, $ \alpha \in Q(k,n).
$
For $k=n$ this reduces to~$\det(A) = \prod_{i=1}^n \lambda_i$.

If~$A$ is rectangular, the singular values of~$A$ and
of~$A^{(k)}$ satisfy the same multiplicative property as the eigenvalues. Then using the singular value decomposition and the Cauchy-Binet theorem  yields 
\begin{equation}\label{eq:k_mul_rank}
    \rank(A^{(k)}) = \binom{\ell}{k},
\end{equation}
where~$\ell:=\rank(A)$, and~$\binom{\ell}{k}$ is defined as zero when~$k>\ell$. It follows from~\eqref{eq:k_mul_rank} 
that~$A^{(k)}=0$ if and only if~(iff)~$k>\rank(A)$, and   that $A^{(k)}$ has non-full rank iff~$A$ has non-full rank.

One reason  for the usefulness of the $k$-compounds in systems and control theory 
is that they have an important geometric application.
\subsubsection{Geometric interpretation of the  multiplicative compound}
Fix~$k$ vectors~$x^1,\dots,x^k \in \R^n$, and let~$P(x^1,\dots,x^k) := \{ \sum_{i=1}^k s_i x^i \st s_j \in[0,1] \}$ denote the parallelotope with  vertices~$x^1,\dots,x^k$ and~$0$ (see Fig.~\ref{fig:parallelogram}). Let~$\text{vol}(P)$ denote the volume of~$P$. Define the~$n\times k$ matrix $X:=\begin{bmatrix}
    x^1&\dots&x^k
\end{bmatrix}$,
and the~$k \times k$ non-negative definite matrix 
$
    G(x^1,\dots,x^k)  :=  X^T X$.
Then~$\vol(P) = \sqrt{\det(G )}$  \cite[Ch.~IX]{Gantmacher_vol1}. To express this using the multiplicative compound, note that~$\det(G)=G^{(k)}$, so
$
    \vol(P)  =\sqrt{ (X^T X)^{(k)} } 
     = \sqrt{ (X^{(k)} ) ^T X ^{(k)} }.
$
By definition, the dimensions of~$X^{(k)}$ are~$\binom{n}{k}\times \binom{k}{k}$, i.e. it is an~$\binom{n}{k}$-dimensional  column vector, so we conclude that
\begin{equation}\label{eq:volpeqn}
\vol(P)=|X^{(k)}|_2,
\end{equation}
where $|\cdot|_2$ denotes the $L_2$ norm. In the special case~$k=n$, this reduces to  the well-known formula 
$$
\vol(P(x^1,\dots,x^n))=|\det(\begin{bmatrix}
    x^1&\dots & x^n
\end{bmatrix})|.
$$

\begin{figure}
    \centering
    \begin{tikzpicture}
        \draw[dashed,pattern=dots] (0,0,0)--(3,0.5,0)--(4,2.5,0)--(1,2,0)--cycle;
        \draw[thick,->] (0,0,0)--(3,0.5,0) node[right]{$x^1$};
        \draw[thick,->] (0,0,0)--(1,2,0) node[above]{$x^2$};
        \draw (0,0,0) node[below]{0};
        \draw (2,1.25,0) node[fill=white,inner sep=0.5pt]{$P(x^1,x^2)$};
    \end{tikzpicture}
    \caption{  2D parallelotope with  vertices~$0$, $x^1$, and~$x^2$.}
    \label{fig:parallelogram}
\end{figure}
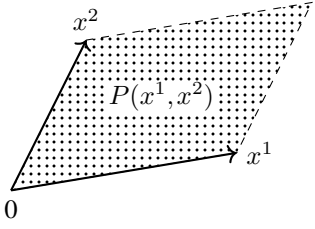

\subsection{Matrix pencils}
Given~$A,B \in \C^{n \times n}$, the  associated matrix pencil is the matrix polynomial~\eqref{Eq:mtpen}. 
The matrix pencil is called \emph{regular} if there exists a~$\lambda\in \C$ such that~$\det(A - \lambda B) \not = 0$. Otherwise, it is called singular.
The normal rank of~$(A,B)$ is
\[
    \nrank(A,B) := \max_{\lambda \in \C} \rank(A - \lambda B).
\] 
Any $\lambda_0 \in \C$  for which 
\begin{equation}
    \rank(A - \lambda_0 B) < \nrank(A,B) 
\end{equation}
is called a finite (generalized) eigenvalue of~$(A,B)$.
For any such~$\lambda_0$ there exists a vector~$v \in \C^n\setminus\{0\}$ such that
\begin{equation}\label{eq:fgeigen}
    A v = \lambda_0 B v.
\end{equation}
If~$(A,B)$ is regular then~\eqref{eq:fgeigen} implies that~$\lambda_0$ is a finite eigenvalue of~$(A,B)$.
If~$\det(B) = 0$ then~$(A,B)$ also has an eigenvalue at infinity, which corresponds to the zero eigenvalue of the matrix pencil~$(B,A)$.

Any~$A,B \in \C^{n \times n}$  may be jointly  triangularized using the generalized Schur decomposition~(GSD)  \cite[Thm. 7.7.1]{golub2013matrix},
that is, there exist unitary matrices~$U,V\in\C^{n\times n}$ such that
\begin{equation}\label{eq:jointtrig}
    U A V = T, \quad
    U B V = S,
\end{equation}
where~$T$ and~$S$ are upper triangular. The GSD is particularly useful when studying the spectrum of a matrix pencil, as
\begin{align*}
    \det(A - \lambda B) &= \det(U)\det(T - \lambda S)\det(V) \\
    &= \det(U)\det(V) \prod_{i=1}^n (T_{ii} - \lambda S_{ii}).
\end{align*}
Thus,~$(A,B)$ is singular iff there exists~$i \in \{1,\dots,n\}$ such that~$T_{ii} = S_{ii} = 0$.
If~$(A,B)$ is regular, then its eigenvalues may be read from the diagonal entries of~$T$ and~$S$: for every $i \in \{1,\dots,n\}$ such that $S_{ii} \neq 0$, $T_{ii} / S_{ii}$ is a finite eigenvalue of~$(A,B)$, and~$(A,B)$ has an eigenvalue at infinity iff there exists $i \in \{1, \dots,n\}$ such that $S_{ii} = 0$.

\subsection{Difference-algebraic equations}
Consider the DAE~\eqref{eq:DAE_TINV}  
with~$x:\{0,1,\dots\} \to \R^n$, and~$B,A\in\R^{n\times n}$.
If~$B$ is regular then this is equivalent to the discrete-time LTI  system~$x(j+1)=B^{-1}Ax(j)$, 
but we will assume that~$B$ is singular. 
Then~\eqref{eq:DAE_TINV} 
may be interpreted as a discrete-time dynamical system with algebraic constraints.

An initial condition~$x(0)$ is called \emph{consistent} if~\eqref{eq:DAE_TINV}
admits a corresponding solution~$x(j)$    for all~$j\geq 0$. For example,~$x(0)=0$ is always   consistent. 
The system~\eqref{eq:DAE_TINV} is called \emph{tractable} (some authors use instead the term \emph{solvable}) if for any consistent initial condition~$x(0)$
the system~\eqref{eq:DAE_TINV}  admits a \emph{unique} solution~$x(j)$, $j=0,1,\dots$.

The next two  results relate  the system-theoretic properties of~\eqref{eq:DAE_TINV} to  the matrix pencil~$(A,B)$. To state them, we recall 
the notions of the Drazin index and the Drazin inverse. 
\begin{Definition}\citep{drazin1958}
The~\emph{Drazin index}
of a square matrix $A$,
denoted $\ind(A)$,
is the minimal integer~$k\ge 0$ such that 
$
    \rank (A^{k}) = 
    \rank ( A^{k+1} ).
 $
\end{Definition}
For example, if~$A$ is regular then~$\rank(A^0)=\rank(A^1)$, so~$\ind(A)=0$. 
If~$N$ is nilpotent, i.e. there exists a minimal integer~$k$ such that~$N^k=0$,  then~$\ind(N)=k$.
\begin{Definition}\label{def:draz_inv}\citep{drazin1958}
%
Let $A$ be a square matrix.
The~\emph{Drazin inverse}
of $A$ is a matrix $X$ such that 
\begin{enumerate}
    \item $A^{\ind (A) +1} X 
    = A^{\ind (A)}$,
    \item $AX = XA$,
    \item $XAX = X$.
\end{enumerate}
%
%
\end{Definition}
%
It is known that the Drazin inverse, denoted~$A^D$, always exists and is unique.  
If~$A$ is regular then~$A^D=A^{-1}$, and  
if~$N$ is nilpotent then~$N^D=0$. 
If
the Jordan decomposition of~$A$ is
\be\label{eq:jordan_A}
A = T^{-1}
\begin{bmatrix}
C & 0 \\
0 & N
\end{bmatrix}
T,
\ee
with~$C$ regular and~$N$ nilpotent, 
then
$
A^{D} = T^{-1}
\begin{bmatrix}
C^{-1} & 0 \\
0 & 0
\end{bmatrix}
T.  
$
\begin{Proposition}
\label{prop:tract}\citep{gene_inverse}
%
The DAE~\eqref{eq:DAE_TINV}
is~{tractable} iff there exists~$\lambda\in\C$
such that
$\det (A- \lambda B   ) \ne 0$, that is, iff~$(A,B)$ is regular.
\end{Proposition}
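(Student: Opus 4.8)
\emph{The plan} is to prove both implications by reducing the pencil $(A,B)$ to a decoupled form, the classical Drazin-inverse argument for~DAEs~\citep{gene_inverse} specialized to the discrete-time setting. For the ``if'' direction, suppose $(A,B)$ is regular and choose $\lambda_0\in\R$ such that $M:=A-\lambda_0 B$ is regular; this is possible because $\det(A-\lambda B)$ is a nonzero polynomial in $\lambda$ with only finitely many real roots. Premultiplying~\eqref{eq:DAE_TINV} by $M^{-1}$ rewrites it as $\hat B x(j+1)=\hat A x(j)$ with $\hat B:=M^{-1}B$ and $\hat A:=M^{-1}A$; since $A=M+\lambda_0 B$ we have $\hat A=I_n+\lambda_0\hat B$, so $\hat A$ is a polynomial in $\hat B$, and in particular $\hat A\hat B=\hat B\hat A$. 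Applying the core-nilpotent splitting of $\hat B$ (cf.~\eqref{eq:jordan_A}), in a suitable basis $\hat B=\diag(B_1,N_1)$ with $B_1$ regular and $N_1$ nilpotent, and then $\hat A=\diag(I+\lambda_0 B_1,\,I+\lambda_0 N_1)$ in the same basis, with $I+\lambda_0 N_1$ regular since $N_1$ is nilpotent. Writing $x$ in these coordinates as a pair $(y,z)$, the DAE decouples into $y(j+1)=B_1^{-1}(I+\lambda_0 B_1)y(j)$ and $N_1 z(j+1)=(I+\lambda_0 N_1)z(j)$.

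The first recursion has a unique solution for every $y(0)$. For the second, put $R:=(I+\lambda_0 N_1)^{-1}N_1$, which is nilpotent since it is a product of commuting matrices one of which is nilpotent; the equation becomes $z(j)=R z(j+1)$, so $z(j)=R^{m}z(j+m)=0$ for all large $m$, and hence $z\equiv 0$ in every solution. Therefore an initial condition $x(0)$ is consistent iff its $z$-component vanishes, and in that case the unique solution is $(B_1^{-1}(I+\lambda_0 B_1))^{j}y(0)$ in the $y$-block and $0$ in the $z$-block; this is precisely tractability. (Equivalently the solution may be written through $\hat B^{D}$ and $\hat B^{D}\hat A$, as in~\citep{gene_inverse}.)

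For the ``only if'' direction, assume $(A,B)$ is singular, so $\det(A-\lambda B)\equiv 0$ and $A-\lambda B$ is singular over the field of rational functions $\R(\lambda)$; multiplying a nonzero kernel vector over $\R(\lambda)$ by a common denominator yields a polynomial vector $v(\lambda)=\sum_{i=0}^{d}v_i\lambda^i$ with $v_d\ne 0$ and $(A-\lambda B)v(\lambda)=0$. Matching powers of $\lambda$ gives $Av_0=0$, $Av_i=Bv_{i-1}$ for $i=1,\dots,d$, and $Bv_d=0$. Define $\tilde x(0):=0$, $\tilde x(j):=v_{d+1-j}$ for $j=1,\dots,d+1$, and $\tilde x(j):=0$ for $j\ge d+2$; a direct check using these relations shows $B\tilde x(j+1)=A\tilde x(j)$ for all $j\ge 0$. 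Thus $\tilde x$ solves~\eqref{eq:DAE_TINV} from the consistent initial condition $x(0)=0$, yet $\tilde x(1)=v_d\ne 0$ while the identically zero trajectory is also a solution, so uniqueness fails and~\eqref{eq:DAE_TINV} is not tractable. I expect the converse to be the delicate part: the ``if'' direction is a mechanical decoupling once~\eqref{eq:DAE_TINV} has been premultiplied into a commuting pair, whereas the converse must convert the bare identity $\det(A-\lambda B)\equiv 0$ into an explicit nonzero trajectory --- which needs the passage to $\R(\lambda)$ to extract a polynomial null vector and the one-step shift of its coefficients so that the trajectory issues from the consistent value $0$.
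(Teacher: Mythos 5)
The paper does not actually prove this proposition --- it is imported verbatim from the cited reference on generalized inverses --- so there is no internal proof to compare against; your argument is correct and essentially reconstructs the classical proof from that literature. For the ``if'' direction you use the standard shift trick: premultiplying \eqref{eq:DAE_TINV} by $(A-\lambda_0 B)^{-1}$ makes $\hat A=I_n+\lambda_0\hat B$ a polynomial in $\hat B$, and the core--nilpotent splitting of $\hat B$ decouples the DAE into an explicit recursion on the regular block and a nilpotent block whose component must vanish identically (the backward propagation $z(j)=R^{m}z(j+m)$ is legitimate because a solution is, by definition, defined for all $j\ge 0$); this gives existence and uniqueness from every consistent initial condition, consistent with the Drazin-inverse solution formula quoted in Proposition~\ref{prop:x0_consist}. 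For the ``only if'' direction, the polynomial null-vector construction is sound: matching coefficients yields $Av_0=0$, $Av_i=Bv_{i-1}$, $Bv_d=0$, and your reversed, shifted coefficient sequence is a nonzero solution emanating from the always-consistent initial condition $x(0)=0$, so uniqueness fails and tractability is violated. The only cosmetic remark is that $v_d\neq 0$ is automatic once $d$ is taken as the exact degree of the cleared-denominator vector, which is what you intend; no gap remains.
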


\begin{Proposition}\citep{gene_inverse,Belov2018}
\label{prop:x0_consist}
%
Assume that~\eqref{eq:DAE_TINV}
is tractable.
Fix $\lambda\in\mathbb{C}$  such that 
$\det (A- \lambda B   ) \ne 0$,
and let
\begin{align}\label{eq:change1}
    \hat{B}_{\lambda}:= (A-\lambda B  )^{-1} B, \quad
    \hat{A}_{\lambda}:= (A-\lambda B  )^{-1} A.
\end{align}
Let $i:=\ind (\hat{B}_{\lambda})$.
An initial condition $x(0)$ is~\emph{consistent}
iff $
x(0)$ is in the range of~$   (\hat{B}_{\lambda})^{i}  $,
and for such an initial condition the unique 
solution of~\eqref{eq:DAE_TINV} is
\begin{align}\label{eq:chnage2}
    x(j ) = 
    \left ( 
    ( \hat{B}_{\lambda} )^{D}
    \hat{A}_{\lambda}
    \right )^{j}
    x(0), \quad 
    j=0,1, \dots.
\end{align}
Furthermore,~$\lim_{j\to \infty} x(j)=0$ for any consistent initial condition~$x(0)$ iff
 all the finite eigenvalues of~$(A,B)$ lie in the open unit disk.
\end{Proposition}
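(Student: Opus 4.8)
The plan is to reduce~\eqref{eq:DAE_TINV} to a decoupled ``finite part plus nilpotent part'' in suitable coordinates and then read off all three claims from that normal form. Since~$\det(A-\lambda B)\neq 0$, left-multiplying~\eqref{eq:DAE_TINV} by the invertible matrix~$(A-\lambda B)^{-1}$ produces the equivalent system
\[
\hat{B}_\lambda\, x(j+1) = \hat{A}_\lambda\, x(j),
\]
with exactly the same solution set. The key algebraic fact is that $\hat{A}_\lambda = (A-\lambda B)^{-1}\bigl((A-\lambda B)+\lambda B\bigr) = I_n + \lambda \hat{B}_\lambda$, so $\hat{A}_\lambda$ and $\hat{B}_\lambda$ commute and are simultaneously block-diagonalized by the Jordan decomposition of $\hat{B}_\lambda$. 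Write $\hat{B}_\lambda = T^{-1}\diag(C,N)T$ as in~\eqref{eq:jordan_A}, with $C$ regular and $N$ nilpotent; comparing ranks of powers shows that $i=\ind(\hat{B}_\lambda)$ equals the nilpotency index of $N$, and $\hat{A}_\lambda = T^{-1}\diag(I+\lambda C,\,I+\lambda N)T$. In the coordinates $y=Tx=(y_1,y_2)$ conformal with the blocks, the system decouples into $C\,y_1(j+1) = (I+\lambda C)y_1(j)$ and $N\,y_2(j+1) = (I+\lambda N)y_2(j)$.

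For the nilpotent block, $I+\lambda N$ is invertible and $\tilde N := (I+\lambda N)^{-1}N$ commutes with $N$ and is nilpotent of index $i$; rewriting the block as $y_2(j) = \tilde N\,y_2(j+1)$ and iterating forces $y_2(j) = \tilde N^{i}y_2(j+i) = 0$ for every $j\ge 0$. Hence a solution on $\{0,1,\dots\}$ exists iff $y_2(0)=0$, in which case it is unique, the first block being the genuine recursion $y_1(j+1) = (C^{-1}+\lambda I)y_1(j)$ (here $C$ regular is used). Translating back, $x(0)$ is consistent iff $Tx(0)$ has vanishing second component, i.e.\ iff $x(0)\in T^{-1}\operatorname{range}\diag(C^i,0) = \operatorname{range}\bigl(\hat{B}_\lambda^{i}\bigr)$, using $C^i$ invertible and $N^i=0$. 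This is the consistency characterization.

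For the explicit solution, the Drazin inverse formula recalled after~\eqref{eq:jordan_A} gives $\hat{B}_\lambda^D = T^{-1}\diag(C^{-1},0)T$, whence $\hat{B}_\lambda^D\hat{A}_\lambda = T^{-1}\diag\bigl(C^{-1}(I+\lambda C),\,0\bigr)T = T^{-1}\diag(C^{-1}+\lambda I,\,0)T$ and $\bigl(\hat{B}_\lambda^D\hat{A}_\lambda\bigr)^j = T^{-1}\diag\bigl((C^{-1}+\lambda I)^j,\,0\bigr)T$. Applied to a consistent $x(0)$, whose $y_2$-part is killed by the zero block, this reproduces $y_1(j) = (C^{-1}+\lambda I)^j y_1(0)$, $y_2(j)=0$, which is precisely~\eqref{eq:chnage2}. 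For the asymptotic claim, $x(j)\to 0$ for every consistent $x(0)$ iff $(C^{-1}+\lambda I)^j\to 0$, i.e.\ iff the spectral radius of $C^{-1}+\lambda I$ is below $1$; the eigenvalues of $C^{-1}+\lambda I$ are the numbers $\lambda+\gamma^{-1}$ with $\gamma\in\sigma(C)$.

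It remains to match $\sigma(C^{-1}+\lambda I)$ with the finite eigenvalues of $(A,B)$, and this is the only step requiring any care. From the identity $A-\mu B = (A-\lambda B)\bigl(I-(\mu-\lambda)\hat{B}_\lambda\bigr)$ together with $\det(A-\lambda B)\neq 0$, one gets that $\mu\neq\lambda$ is a finite eigenvalue of $(A,B)$ iff $(\mu-\lambda)^{-1}\in\sigma(\hat{B}_\lambda)\setminus\{0\} = \sigma(C)$, i.e.\ iff $\mu = \lambda+\gamma^{-1}$ for some $\gamma\in\sigma(C)$; hence $\sigma(C^{-1}+\lambda I)$ is exactly the set of finite eigenvalues of $(A,B)$, and the last claim follows. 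The main obstacle is therefore bookkeeping rather than a single deep step: one must keep the three representations---the $(A-\lambda B)^{-1}$-rescaling, the Jordan decomposition of $\hat{B}_\lambda$, and the eigenvalue dictionary coming from the factorization of $A-\mu B$---mutually consistent throughout.
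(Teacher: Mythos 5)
Your proof is correct. The paper itself gives no proof of this proposition---it is quoted from the cited references---and your argument is essentially the standard one found there: rescale by $(A-\lambda B)^{-1}$, exploit the identity $\hat{A}_\lambda = I_n+\lambda\hat{B}_\lambda$ so that $\hat A_\lambda$ and $\hat B_\lambda$ commute, split via the core--nilpotent (Jordan) decomposition of $\hat B_\lambda$ as in~\eqref{eq:jordan_A}, kill the nilpotent block by backward iteration to get consistency and uniqueness, read off~\eqref{eq:chnage2} from the Drazin formula, and translate $\sigma(C^{-1}+\lambda I)$ into the finite eigenvalues of $(A,B)$ through the factorization $A-\mu B=(A-\lambda B)\bigl(I-(\mu-\lambda)\hat B_\lambda\bigr)$; all steps, including the eigenvalue dictionary, check out.
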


The next sections describe our main results.

\section{The $k$-multiplicative compound DAE}\label{sec:kDAE}
We begin by generalizing   a DAE  to
a corresponding  $k$-compound DAE. 
We consider the general case of a
    time-varying  DAE: 
\begin{equation}\label{eq:lin_dae_dt}
    B(j+1) x(j+1) = A(j)x(j),\quad j=0,1,\dots,
\end{equation}
 with~$x\in\R^{n}$, and~$A,B\in\R^{n\times  n }$.

We introduce a new definition.
\begin{Definition}
Fix~$k\in\{1,\dots,n\}$, and let~$r:=\binom{n}{k}$. 
 The~$k$-multiplicative 
  compound DAE corresponding to~\eqref{eq:lin_dae_dt} is
\begin{equation}\label{eq:compound_dae_dt}
    B^{(k)}(j+1) y(j+1) = A^{(k)}(j) y(j), \quad j=0,1,\dots,
\end{equation}
with~$y\in\R^r$.
\end{Definition}

Note that for~$k=1$, Eq.~\eqref{eq:compound_dae_dt} is the original DAE~\eqref{eq:lin_dae_dt}, whereas for~$k=n$,  Eq.~\eqref{eq:compound_dae_dt} becomes the scalar equation: 
$
\det(B(j+1))
y(j+1)
 =
 \det(A(j))y(j) .
$

The next result shows that the $k$-compound DAE  
tracks  the evolution of  volumes of $k$-parallelotopes
  under the DAE~\eqref{eq:lin_dae_dt}.
\begin{Proposition}\label{prop:compound_dae_dt}
Fix $k \in \{1,\dots,n\}$, and pick~$k$
consistent 
initial conditions~$a^1,\dots,a^k \in \mathbb{R}^n$ of~\eqref{eq:lin_dae_dt}. Let~$x^i(\ell) := x(\ell,a^i)$    denote   a    solution
at time~$\ell$ of~\eqref{eq:lin_dae_dt}   emanating from~$x(0)=a^i$. Define the $n \times k$ matrix 
\[
X(j) := \begin{bmatrix} x^1(j) & \dots & x^k(j) \end{bmatrix} , 
\]
and the~$\binom{n}{k}$-dimensional  column vector
\begin{equation}\label{eq:kcomp_y}
y(j):=(X(j) )^{(k)}. 
\end{equation}
Then~$y$ is a solution of  the $k$-compound DAE~\eqref{eq:compound_dae_dt}.
\end{Proposition}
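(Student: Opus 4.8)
The plan is to take the defining relation of the DAE for each of the~$k$ solution columns, assemble them into a single matrix identity, and then apply the Cauchy--Binet theorem~\eqref{eq:cb} to pass to the $k$-compound. First I would record that, since each~$a^i$ is a consistent initial condition and~$x^i(\ell)=x(\ell,a^i)$ is a solution of~\eqref{eq:lin_dae_dt}, we have for every~$j\geq 0$ and every~$i\in\{1,\dots,k\}$
\begin{equation*}
B(j+1)\, x^i(j+1) = A(j)\, x^i(j).
\end{equation*}
Concatenating these~$k$ vector identities column by column gives the single~$n\times k$ matrix equation
\begin{equation}\label{eq:propmat}
B(j+1)\, X(j+1) = A(j)\, X(j),\quad j=0,1,\dots.
\end{equation}
This step is routine; it is just the linearity of matrix--vector multiplication in the (vector) argument.

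Next I would apply the $k$-multiplicative compound to both sides of~\eqref{eq:propmat}. The left-hand side is a product of an~$n\times n$ matrix with an~$n\times k$ matrix, and the right-hand side is likewise a product of an~$n\times n$ matrix with an~$n\times k$ matrix, so Cauchy--Binet~\eqref{eq:cb} (with the dimensions~$n,n,k$, and compound order~$k\leq\min\{n,n,k\}=k$) applies to each side and yields
\begin{equation*}
\bigl(B(j+1)\bigr)^{(k)} \bigl(X(j+1)\bigr)^{(k)} = \bigl(A(j)\bigr)^{(k)} \bigl(X(j)\bigr)^{(k)}.
\end{equation*}
By the definition~\eqref{eq:kcomp_y} of~$y$, this is exactly~$B^{(k)}(j+1)\, y(j+1) = A^{(k)}(j)\, y(j)$, which is the $k$-compound DAE~\eqref{eq:compound_dae_dt}. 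Since this holds for all~$j\geq 0$ and~$y(0)=(X(0))^{(k)}$ is a well-defined vector in~$\R^r$, the sequence~$y$ is a solution of~\eqref{eq:compound_dae_dt}, completing the argument.

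I do not anticipate a genuine obstacle here: the statement is essentially a repackaging of Cauchy--Binet once the~$k$ scalar solutions are bundled into the matrix~$X$. The only point that needs a word of care is making sure the rectangular version of the Cauchy--Binet identity~\eqref{eq:cb} is invoked with the correct triple of dimensions (the shared inner dimension being~$n$ on both sides and the compound order being~$k$, which does not exceed any of~$n,n,k$), so that~$(B X)^{(k)} = B^{(k)} X^{(k)}$ is legitimate even though~$X$ is not square; the excerpt states~\eqref{eq:cb} in exactly this rectangular generality, so nothing further is required. A closing remark worth adding is that consistency of each~$a^i$ is used only to guarantee that the solutions~$x^i(\ell)$ exist for all~$\ell\geq 0$, so that~$X(j)$ and hence~$y(j)$ are defined for all~$j$; it plays no other role in the computation.
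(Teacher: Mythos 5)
Your proof is correct and follows essentially the same route as the paper's: stack the $k$ solutions into the matrix identity $B(j+1)X(j+1)=A(j)X(j)$ and apply the Cauchy--Binet theorem to both sides. The extra remarks on the rectangular form of Cauchy--Binet and on the role of consistency are fine but not needed beyond what the paper's one-line argument already contains.
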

The proof is straightforward. By~\eqref{eq:lin_dae_dt},
$    B(j+1) X(j+1) = A(j) X(j).
$
Taking the $k$th multiplicative compound on both sides, and using the Cauchy-Binet Theorem completes the proof.

\section{The $k$-multiplicative compound matrix pencil}\label{sec:kpencil}
  In the time-invariant case,  the $k$-compound DAE is 
\be\label{eq:indu_ti}
B^{(k)} y(j+1) = A^{(k)}  y(j).
\ee
Our next goal is to extend known analysis results for time-invariant DAEs to~\eqref{eq:indu_ti}. Since the known analysis results are closely related to the pencil~$(A,B)$, we begin by introducing the~$k$-pencil associated to the~$k$-multiplicative compound  DAE.  
 
 \begin{Definition}
    Given~$A,B \in \C^{n \times n}$, and~$k\in\{1,\dots,n\}$,
   the $k$-multiplicative compound  pencil of~$(A,B)$ is the matrix pencil 
    \begin{equation}\label{eq:our_mat_pencil}
        (A,B)^{(k)} := A^{(k)} - \lambda B^{(k)},\quad
        \lambda\in\C.
    \end{equation}
 \end{Definition}
Note that~$(A,B)^{(1)}$ is just~$(A,B)$,
and~$(A,B)^{(n)} = \det(A) - \lambda \det(B)$. Also,~$(A,0)^{(k)}$ is just~$A^{(k)}$. 
 \begin{Remark}
 Several authors associate with~$(A,B)$ the matrix polynomial
 \be\label{eq:matrix_poly}
 (A-\lambda B)^{(k)},\quad \lambda\in\C.
 \ee 
 This algebraic construction is particularly useful in studying the Smith normal form of~$(A,B)$.
 It should be 
 noted that~\eqref {eq:our_mat_pencil}
 and~\eqref{eq:matrix_poly} are quite different.  For example, let~$n=2$, $B=I_2$, and
$
    A = \diag(
        \mu_1 , \mu_2)
$. 
Then
\[    (A - \lambda B)^{(2)} = (\mu_1 - \lambda)(\mu_2 - \lambda) , 
\] 
whereas \eqref{eq:our_mat_pencil}
gives
\[
(A,B)^{(2)} =    A^{(2)} - \lambda B^{(2)} = \mu_1 \mu_2 - \lambda.
\] 
In particular, here  the generalized eigenvalues of $(A - \lambda B)^{(2)}$ are simply the generalized eigenvalues of $(A - \lambda B)$, whereas $(A,B)^{(2)}$  admits a  single generalized eigenvalue~$\mu_1 \mu_2$. In this sense, neither $(A,B)^{(k)}$ nor~$(A - \lambda B)^{(k)}$ is a generalization or a special case of the other.
 \end{Remark}

Propositions~\ref{prop:tract} and~\ref{prop:x0_consist} imply that the matrix pencil $(A,B)$ determines important system-theoretic properties of~\eqref{eq:lin_dae_dt}, and thus the matrix pencil $(A,B)^{(k)}$ determines the same system-theoretic properties for~\eqref{eq:indu_ti}.
In particular, it follows from Prop.~\ref{prop:tract} that the $k$-compound system \eqref{eq:indu_ti} is tractable iff the matrix pencil~$(A,B)^{(k)}$ is regular.
 
\subsection{Regularity of~$(A,B)^{(k)}$}
The next result provides a necessary and sufficient condition for regularity of the pencil~$(A,B)^{(k)}$, with~$k\geq 2$.

\begin{Proposition}
    \label{prop:mul_regular}
Let $A,B \in \C^{n \times n}$.  Fix~$k \in \{2,\dots,n\}$.  The following  four   conditions are equivalent. 
    \begin{enumerate}
        \item \label{cond:pencil_singular} The pencil~$(A,B)^{(k)}$ is singular. 
        \item \label{cond:triu_singular}
        For any GSD~\eqref{eq:jointtrig}
there exists~$\alpha\in Q(k,n)$ such that~$\prod_{i=1}^k T_{\alpha_i,\alpha_i}=\prod_{i=1}^k S_{\alpha_i,\alpha_i}=0.
$
\item \label{cond:A_B_singular}
        $\det(A)=\det(B)=0$.
        \item \label{cond:kernels_intersect}
        $\ker(A^{(k)}) \cap \ker(B^{(k)})\not = \{0\}.$
    \end{enumerate}
\end{Proposition}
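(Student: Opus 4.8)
The plan is to prove the equivalences by establishing a cycle of implications, with the GSD serving as the main computational tool. First I would set up the generalized Schur decomposition $UAV = T$, $UBV = S$ with $T, S$ upper triangular and $U, V$ unitary. Applying the Cauchy--Binet theorem to $(A,B)^{(k)} = A^{(k)} - \lambda B^{(k)}$ and using that $U^{(k)}, V^{(k)}$ are unitary (hence regular), I get $U^{(k)}(A^{(k)} - \lambda B^{(k)})V^{(k)} = T^{(k)} - \lambda S^{(k)}$, so $(A,B)^{(k)}$ is regular iff $(T,S)^{(k)}$ is regular. Now $T^{(k)}$ and $S^{(k)}$ are both upper triangular with diagonal entries given by formula~\eqref{eq:akii}: $(T^{(k)})_{i,i} = \prod_{j=1}^k T_{\alpha^i_j,\alpha^i_j}$ and similarly for $S$. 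Hence $\det(T^{(k)} - \lambda S^{(k)}) = \prod_{\alpha \in Q(k,n)} \left(\prod_{j=1}^k T_{\alpha_j,\alpha_j} - \lambda \prod_{j=1}^k S_{\alpha_j,\alpha_j}\right)$, and this polynomial in $\lambda$ is identically zero iff some factor vanishes identically, i.e. iff there exists $\alpha \in Q(k,n)$ with $\prod_{j=1}^k T_{\alpha_j,\alpha_j} = \prod_{j=1}^k S_{\alpha_j,\alpha_j} = 0$. This gives the equivalence of conditions~\ref{cond:pencil_singular} and~\ref{cond:triu_singular}; one subtlety is that the GSD is not unique, so I should phrase~\ref{cond:triu_singular} carefully (it holds "for any GSD" precisely because~\ref{cond:pencil_singular} is GSD-independent, so the two quantifiers collapse).

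Next I would show \ref{cond:triu_singular} $\Leftrightarrow$ \ref{cond:A_B_singular}. For the direction \ref{cond:A_B_singular} $\Rightarrow$ \ref{cond:triu_singular}: if $\det(A) = \det(B) = 0$, then $\det(T) = \prod_i T_{ii} = 0$ and $\det(S) = \prod_i S_{ii} = 0$, so there is an index $p$ with $T_{pp} = 0$ and an index $q$ with $S_{qq} = 0$. Since $k \geq 2$, I can choose $\alpha \in Q(k,n)$ containing both $p$ and $q$ (if $p = q$ just pick any $\alpha$ containing it), and then both products in~\ref{cond:triu_singular} vanish. For the converse \ref{cond:triu_singular} $\Rightarrow$ \ref{cond:A_B_singular}: if $\prod_{j=1}^k T_{\alpha_j,\alpha_j} = 0$ then some diagonal entry of $T$ is zero, so $\det(A) = \pm\det(T) = 0$, and symmetrically $\det(B) = 0$. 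This step is where the hypothesis $k \geq 2$ is essential — for $k = 1$ one would need a single index that is simultaneously a zero of the $T$-diagonal and the $S$-diagonal, which is exactly the $k=1$ (ordinary pencil) singularity condition and is strictly stronger than $\det(A) = \det(B) = 0$.

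Finally I would prove \ref{cond:kernels_intersect} $\Leftrightarrow$ \ref{cond:A_B_singular}, which is the cleanest part. By~\eqref{eq:k_mul_rank}, $A^{(k)}$ has non-full rank iff $A$ has non-full rank, i.e. iff $\det(A) = 0$, and likewise for $B$. Now if $\det(A) = \det(B) = 0$, then both $A^{(k)}$ and $B^{(k)}$ are singular, but I need a common nonzero kernel vector, not just two separately nonzero kernels. Here I would again pass through the GSD: $\ker(A^{(k)}) \cap \ker(B^{(k)}) \neq \{0\}$ iff $\ker(T^{(k)}) \cap \ker(S^{(k)}) \neq \{0\}$ (conjugating by the regular matrices $U^{(k)}$ on the left and $V^{(k)}$ on the right maps one intersection isomorphically onto the other, after adjusting by $(V^{(k)})^{-1}$). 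For upper triangular $T^{(k)}, S^{(k)}$, a standard argument shows that the intersection of kernels is nontrivial iff $T^{(k)} - \lambda S^{(k)}$ is singular for all $\lambda$, which by the product formula above is again condition~\ref{cond:triu_singular}. Alternatively, and perhaps more transparently, I would note that $\ker(A^{(k)}) \cap \ker(B^{(k)}) \neq \{0\}$ is equivalent to saying the pencil $A^{(k)} - \lambda B^{(k)}$ has a common kernel vector for all $\lambda$, which for square matrices is one of the characterizations of pencil singularity, giving \ref{cond:kernels_intersect} $\Leftrightarrow$ \ref{cond:pencil_singular} directly and closing the loop. I expect the bookkeeping around the non-uniqueness of the GSD and the "for any GSD" phrasing in~\ref{cond:triu_singular} to be the only real subtlety; the rest is a direct application of Cauchy--Binet and~\eqref{eq:akii}.
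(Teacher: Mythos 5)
Your treatment of conditions~\ref{cond:pencil_singular}, \ref{cond:triu_singular} and~\ref{cond:A_B_singular} is correct, and in fact a bit more direct than the paper's: you prove \ref{cond:triu_singular}$\Leftrightarrow$\ref{cond:A_B_singular} by using $k\ge 2$ to pick an $\alpha\in Q(k,n)$ containing both a zero diagonal entry of $T$ and a zero diagonal entry of $S$, whereas the paper only proves \ref{cond:pencil_singular}$\Rightarrow$\ref{cond:A_B_singular} and closes the loop through condition~\ref{cond:kernels_intersect}. The problem is precisely condition~\ref{cond:kernels_intersect}: both of your proposed arguments for it rest on the claim that, for square matrices, the pencil $T^{(k)}-\lambda S^{(k)}$ being singular for every $\lambda$ is equivalent to $T^{(k)}$ and $S^{(k)}$ sharing a nonzero kernel vector (``a standard argument'' for upper triangular matrices, or ``one of the characterizations of pencil singularity''). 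This equivalence is false in general, and it is exactly the distinction the proposition is meant to highlight (see the remark following it in the paper): a common kernel vector is sufficient but \emph{not} necessary for singularity of a pencil. A counterexample with upper triangular matrices is $T'=\begin{bmatrix}1&0\\0&0\end{bmatrix}$, $S'=\begin{bmatrix}0&1\\0&0\end{bmatrix}$: here $\det(T'-\lambda S')\equiv 0$, yet $\ker(T')\cap\ker(S')=\{0\}$, since the kernel of $T'-\lambda S'$ is spanned by $\begin{bmatrix}\lambda&1\end{bmatrix}^T$ and moves with $\lambda$. So the only nontrivial implication involving~\ref{cond:kernels_intersect}, namely \ref{cond:A_B_singular}$\Rightarrow$\ref{cond:kernels_intersect} (equivalently \ref{cond:pencil_singular}$\Rightarrow$\ref{cond:kernels_intersect}), is not established by your argument; the trivial direction \ref{cond:kernels_intersect}$\Rightarrow$\ref{cond:pencil_singular} is all you actually have.

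The missing ingredient is the special structure of $k$-compounds with $k\ge 2$, which the paper exploits through an explicit wedge-product construction. Take $x\ne 0$ with $Ax=0$ and $y\ne 0$ with $By=0$. If $x,y$ are linearly independent, complete them to $k$ linearly independent vectors $x,y,w^3,\dots,w^k$ and set $z:=\begin{bmatrix}x&y&w^3&\dots&w^k\end{bmatrix}^{(k)}$; then $z\ne 0$ by~\eqref{eq:k_mul_rank}, while $A^{(k)}z=\begin{bmatrix}Ax&Ay&\dots\end{bmatrix}^{(k)}=0$ because one column is zero (so every $k\times k$ minor vanishes), and $B^{(k)}z=0$ for the same reason applied to the column $By$. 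If $x,y$ are linearly dependent, the same construction with $x$ alone, completed arbitrarily, serves both kernels. This produces the required nonzero element of $\ker(A^{(k)})\cap\ker(B^{(k)})$ directly from $\det(A)=\det(B)=0$, which is the step your proposal skips over.
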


\begin{proof}
We begin by showing that~\eqref{cond:pencil_singular} and~\eqref{cond:triu_singular} are equivalent. Applying the Cauchy-Binet Theorem to~\eqref{eq:jointtrig}
gives
    \begin{equation}\label{eq:triang_k}
        U^{(k)}A^{(k)}V^{(k)} = T^{(k)}, \quad
        U^{(k)}B^{(k)}V^{(k)} = S^{(k)},
    \end{equation}
    and~$T^{(k)},S^{(k)}$ are also upper triangular.
Thus,
\begin{align}\label{eq:detakbk}
    \det( (A,B)^{(k)} ) &= \det ( (U^{(k)})^* (V^{(k)})^*    )  \det(T^{(k)}-\lambda S^{(k)}) , 
\end{align}
so~$
\det(A^{(k)}-\lambda B^{(k)})=0
$
iff
$
\prod_{i=1}^r 
  \left  ((T^{(k)})_{ii} -\lambda (S^{(k)})_{ii}\right) =0,
$
where~$r:=\binom{n}{k}$. 
    In particular, $\det(A^{(k)}-\lambda B^{(k)})=0$  for any~$\lambda\in\C$
    iff
 there exists an~$i\in\{1,\dots,r\}$ such that~$(T^{(k)})_{ii} = (S^{(k)})_{ii}=0$.
Since~$T$ is upper triangular,  entry~$(i,i)$ of~$T^{(k)}$ is 
$\prod_{\ell=1}^k T_{\alpha^{i}_\ell,\alpha^{i}_\ell}$,
where~$\alpha^{i}$ is the~$i$th sequence in~$Q(k,n)$, and similarly for~$S^{(k)}$. Thus,
  \eqref{cond:pencil_singular} and~\eqref{cond:triu_singular} are equivalent.

Suppose that~$(A,B)^{(k)}$
is singular. Then at least one diagonal entry of~$T$  and at least one diagonal entry of~$S$  are zero, so
\begin{align*}
    \det(A)&=\det(U)\det(V)\det(T)=0,\\
    \det(B)&=\det(U)\det(V)\det(S)=0.
\end{align*} 
This proves that~\eqref{cond:pencil_singular}
implies~\eqref{cond:A_B_singular}. 

We now show that~\eqref{cond:A_B_singular}
implies~\eqref{cond:kernels_intersect} for~$k=2$.
Assume that~\eqref{cond:A_B_singular} holds. Then there exist~$x,y\in\C^n\setminus\{0\}$ such that~$Ax=0$, $By=0$.
We consider two cases. If~$x,y$ are linearly dependent then~$x=s y$ for some 
scalar~$s\not =0$.
Pick~$z\in\R^n$ such that~$x,z$ are linearly independent. Then
\begin{align}
    A^{(2)}\begin{bmatrix} x& z \end{bmatrix}^{(2)} &= 
    \begin{bmatrix} Ax& Az \end{bmatrix}^{(2)}=
    \begin{bmatrix} 0& Az \end{bmatrix}^{(2)}=0,
\end{align}
and similarly $B^{(2)}\begin{bmatrix} x& z \end{bmatrix}^{(2)}=B^{(2)}\begin{bmatrix}  s y& z \end{bmatrix}^{(2)}=0$, so~\eqref{cond:kernels_intersect} holds.
Now assume that~$x,y$ are linearly independent. Let~$z:=\begin{bmatrix}
  x& y
  \end{bmatrix}^{(2)}$. Then 
\begin{align*}
  A^{(2)} z=\begin{bmatrix}
  Ax& Ay
  \end{bmatrix}  ^{(2)} 
  =\begin{bmatrix}
  0& Ay
  \end{bmatrix}  ^{(2)}
  =0,
\end{align*}
and similarly~$B^{(2)}  z=0$, so
  $z\in\ker(A^{(2)} ) \cap\ker(B^{(2)})$. 
Since~$x,y$ are linearly independent,~$z\not =0$. 
  A similar argument shows that for any~$j\in\{2,\dots, n\}$, we have~$\ker(A^{(j)} ) \cap\ker(B^{(j)}) \not =\{0\}$, 
  so~\eqref{cond:A_B_singular}
implies~\eqref{cond:kernels_intersect}. 
 
 Now suppose that~\eqref{cond:kernels_intersect} holds. Let~$x\not =0$ be a vector such that~$x\in \ker(A^{(k)} )\cap \ker(B^{(k)} )$. Then
$
 ( A^{(k)}  -\lambda B ^{(k)})x=0 
 $
 for any~$\lambda \in \C$, so~\eqref{cond:pencil_singular} holds.
 We conclude that~\eqref{cond:kernels_intersect}  implies~\eqref{cond:pencil_singular},  
  and this completes the proof of Prop.~\ref{prop:mul_regular}.
\end{proof}

\begin{Remark}
    Prop.~\ref{prop:mul_regular} implies in particular that if~$(A,B)$ is singular then~$(A,B)^{(k)}$ is singular for any~$k\in\{1,\dots,n\}$.  Furthermore, either~$(A,B)^{(k)}$ is regular for all $k>1$, or it is singular for all $k>1$.
\end{Remark}

Prop.~\ref{prop:mul_regular} demonstrates a perhaps surprising property of the $k$-multiplicative compound of a pencil. A sufficient, but not necessary, condition for a pencil~$(A,B)$ to be singular is that~$\ker(A) \cap \ker(B) \neq \{0\}$. However, this condition is both  sufficient and \emph{necessary} for the singularity of
for~$(A,B)^{(k)}$, with~$k > 1$.

The next example illustrates Prop.~\ref{prop:mul_regular}.
\begin{Example}
Suppose that~$A=\diag(0,1,2)$ and~$B=\diag(1,2,0)$. Note that $\det(A)=\det(B)=0$.
Then~$\det(A-\lambda B) =2\lambda (2\lambda-1)$, so~$(A,B)$ is regular.
Also,~$A^{(2)}=\diag(0,0,2) $ and~$B^{(2)} = \diag(2,0,0)$, so
$\det(A^{(2)}-\lambda B^{(2)}) =0$ for any~$\lambda\in \C$ and thus~$(A,B)^{(2)}$ is singular. Consider the 
GSD  in~\eqref{eq:jointtrig}.
Then there exists exactly one~$i\in\{1,2,3\}$ such that~$T_{ii}=0$, and  exactly one~$j\in\{1,2,3\}$ such that~$S_{jj}=0$. Also,~$i\not =j$, as otherwise~$(T,S)$ is singular and this is impossible as~$(A,B)$ is regular. Let~$\alpha$ be the sequence in~$Q(2,3)$ 
that includes~$i$ and~$j$. Then 
$
\prod_{r=1}^2 T_{\alpha_r,\alpha_r}=\prod_{i=1}^2 S_{\alpha_r,\alpha_r}=0.
$
Note also that~$  \begin{bmatrix} 0&1&0 \end{bmatrix}^T \in \ker(A^{(2)}) \cap \ker(B^{(2)})$.
\end{Example}

\subsection{Spectral properties of~$(A,B)^{(k)}$}
Recall that if~$A\in\C^{n\times n} $ and~$k\in\{1,\dots,n\}$ then any eigenvalue of~$A^{(k)}$ is the product of~$k$ eigenvalues of~$A$. The next result demonstrates that  the matrix pencil~$(A,B)^{(k)}$ satisfies a similar property.

\begin{Proposition}\label{prop:mul_gen_eig}
    Let~$A,B \in \C^{n \times n}$. Fix~$k \in \{1,\dots,n\}$. Suppose that~$(A,B)^{(k)}$ is regular. Then every eigenvalue of~$(A,B)^{(k)}$ is the product of~$k$ eigenvalues of~$(A,B)$, where we define the product of infinity with any value as infinity.
\end{Proposition}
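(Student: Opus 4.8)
The natural approach is to reduce everything to the generalized Schur decomposition~\eqref{eq:jointtrig}, exactly as in the proof of Prop.~\ref{prop:mul_regular}. Write $UAV = T$, $UBV = S$ with $U,V$ unitary and $T,S$ upper triangular. The eigenvalues of $(A,B)$ are read off the diagonal pairs: for each $i$, the pair $(T_{ii},S_{ii})$ contributes the eigenvalue $T_{ii}/S_{ii}$ (interpreted as $\infty$ when $S_{ii}=0$ and $T_{ii}\neq 0$; since $(A,B)$ is regular we cannot have both zero — though we should be a little careful, because $(A,B)$ regular is not assumed here, only $(A,B)^{(k)}$ regular; by the Remark after Prop.~\ref{prop:mul_regular} these are equivalent for $k>1$, and for $k=1$ the statement is trivial, so we may safely assume $(A,B)$ regular). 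Applying Cauchy-Binet as in~\eqref{eq:triang_k} gives $U^{(k)}A^{(k)}V^{(k)} = T^{(k)}$ and $U^{(k)}B^{(k)}V^{(k)} = S^{(k)}$, with $T^{(k)}, S^{(k)}$ upper triangular. Since $U^{(k)}, V^{(k)}$ are unitary, this is a GSD of the pencil $(A,B)^{(k)}$, so its eigenvalues are precisely the ratios $(T^{(k)})_{ii}/(S^{(k)})_{ii}$ over the index set $Q(k,n)$.

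The key computational step is then formula~\eqref{eq:akii}: for $\alpha = \alpha^i \in Q(k,n)$,
\[
(T^{(k)})_{ii} = \prod_{j=1}^k T_{\alpha_j,\alpha_j}, \qquad (S^{(k)})_{ii} = \prod_{j=1}^k S_{\alpha_j,\alpha_j}.
\]
So the eigenvalue of $(A,B)^{(k)}$ indexed by $\alpha$ is the ratio $\left(\prod_{j=1}^k T_{\alpha_j,\alpha_j}\right)\big/\left(\prod_{j=1}^k S_{\alpha_j,\alpha_j}\right)$, which — when all $S_{\alpha_j,\alpha_j}$ are nonzero — equals $\prod_{j=1}^k (T_{\alpha_j,\alpha_j}/S_{\alpha_j,\alpha_j})$, exactly the product of the $k$ eigenvalues of $(A,B)$ indexed by $\alpha_1,\dots,\alpha_k$. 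This gives the claim for "finite" eigenvalues directly.

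The main obstacle, and the part requiring genuine care, is the bookkeeping of infinite eigenvalues and of indeterminate $0/0$ entries. If some $S_{\alpha_j,\alpha_j}=0$ (so the $j$-th eigenvalue of $(A,B)$ is $\infty$), then $(S^{(k)})_{ii}=0$; one must check that $(T^{(k)})_{ii}\neq 0$ so that the $\alpha$-indexed eigenvalue of $(A,B)^{(k)}$ is genuinely $\infty$ and not an artifact of a singular pencil. This is where regularity of $(A,B)^{(k)}$ is used: by Prop.~\ref{prop:mul_regular}, condition~\eqref{cond:triu_singular} fails, so for this particular $\alpha$ we cannot have both $\prod_j T_{\alpha_j,\alpha_j}=0$ and $\prod_j S_{\alpha_j,\alpha_j}=0$; hence if $(S^{(k)})_{ii}=0$ then $(T^{(k)})_{ii}\neq 0$, so this index contributes $\infty$, matching the convention "product of $\infty$ with anything is $\infty$" (note regularity of $(A,B)$ rules out any $T_{\alpha_j,\alpha_j}=S_{\alpha_j,\alpha_j}=0$, so no factor is itself indeterminate). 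Conversely if $(S^{(k)})_{ii}\neq 0$ then every $S_{\alpha_j,\alpha_j}\neq 0$, all $k$ factors are finite, and we are in the first case. So every index $\alpha\in Q(k,n)$ yields exactly the product of the corresponding $k$ eigenvalues of $(A,B)$ under the stated convention, and these $\binom{n}{k}$ indices exhaust the eigenvalues of $(A,B)^{(k)}$. I would also remark that one should not claim the converse (every product of $k$ eigenvalues of $(A,B)$ is an eigenvalue of $(A,B)^{(k)}$) unless multiplicities/repetition of indices are handled — but the statement as given only asserts the "every eigenvalue of $(A,B)^{(k)}$ is such a product" direction, so this suffices.
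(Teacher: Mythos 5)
Your proposal is correct and follows essentially the same route as the paper's proof: the generalized Schur decomposition~\eqref{eq:jointtrig}, Cauchy--Binet as in~\eqref{eq:triang_k}, and the diagonal-entry formula~\eqref{eq:akii}, with regularity of~$(A,B)^{(k)}$ ruling out the indeterminate $0/0$ case. Your extra bookkeeping for the infinite eigenvalues (and the observation that regularity of~$(A,B)^{(k)}$ forces regularity of~$(A,B)$ for $k>1$) is sound and merely makes explicit what the paper's shorter argument leaves implicit.
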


\begin{proof}
Let~$r:=\binom{n}{k}$. By Eq.~\eqref{eq:detakbk},
$
    \det((A,B)^{(k)}) = \det((T,S)^{(k)}) = \prod_{i=1}^r ((T^{(k)})_{ii}-\lambda (S^{(k)})_{ii}),
$
so the eigenvalues of~$(A,B)^{(k)}$ are
$
  { (T^{(k)})_{ii} }/{(S^{(k)})_{ii}}    
$, $ i\in\{1,\dots,r\}$, 
where we define~${c}/{0}$, with~$c\not=0$, as infinity (note that the assumption that the pencil is regular guarantees that the case~$(T^{(k)})_{ii} =( S^{(k)})_{ii}=0$ is not possible). In particular, the eigenvalues of~$(A,B)=(A,B)^{(1)}$ are 
$
  { T _{ii} }/{S_{ii}}, i\in\{1,\dots,n\}  .
$
Since~$T$ is upper triangular, Eq.~\eqref{eq:akii} implies that entry~$(i,i)$ of~$T^{(k)}$ is 
$\prod_{\ell=1}^r T_{\alpha^i_\ell,\alpha^i_\ell}$,
where~$\alpha^i$ is the~$i$th sequence in~$Q(k,n)$, and similarly for~$S^{(k)}$. This completes the proof. 
\end{proof}

Given~$k$ finite eigenvalues and the  corresponding~$k$ eigenvectors of~$(A,B)$, the following result gives an explicit formula for the corresponding eigenvalue and eigenvector of~$(A,B)^{(k)}$.
\begin{Proposition}
    Let $A,B \in \C^{n \times n}$, and pick $k \in \{1,\dots,n\}$. Suppose that $\lambda_1,\dots,\lambda_k \in \mathbb{C}$ and $v^1,\dots,v^k \in \mathbb{C}^n \setminus \{0\}$  satisfy 
    \begin{equation}\label{eq:gen_eig}
        A v^i = \lambda_i B v^i, \quad i=1,\dots,k.
    \end{equation}
    Define $\tilde{v} := \begin{bmatrix}v^1 & \dots & v^k\end{bmatrix}^{(k)}$ and $\tilde{\lambda} := \prod_{i=1}^k \lambda_i$.
    Then
    \begin{equation}
        A^{(k)} \tilde{v} = \tilde{\lambda} B^{(k)} \tilde{v}.
    \end{equation} 
\end{Proposition}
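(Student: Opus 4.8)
The statement is an eigenvalue analogue of the Cauchy--Binet identity, and the natural approach is to apply Cauchy--Binet to an appropriately chosen matrix equation. First I would collect the $k$ generalized-eigenvalue relations~\eqref{eq:gen_eig} into a single matrix identity. Writing $W := \begin{bmatrix} v^1 & \dots & v^k \end{bmatrix} \in \C^{n\times k}$ and $D := \diag(\lambda_1,\dots,\lambda_k) \in \C^{k\times k}$, the $k$ equations $Av^i = \lambda_i Bv^i$ are equivalent to the single identity $AW = BWD$. This is the key step that packages the hypothesis into a form amenable to the multiplicativity of the compound.

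Next I would take the $k$-multiplicative compound of both sides of $AW = BWD$. Since $A \in \C^{n\times n}$, $W \in \C^{n\times k}$, $B \in \C^{n\times n}$, and $D \in \C^{k\times k}$, the Cauchy--Binet Theorem~\eqref{eq:cb} applies on both sides (the relevant dimension is $k \le \min\{n,n,k\}$ on the right), giving $A^{(k)} W^{(k)} = B^{(k)} W^{(k)} D^{(k)}$. Now $W^{(k)}$ is a $\binom{n}{k}\times 1$ column vector, namely $\tilde v$ by definition; and since $D$ is diagonal $k\times k$, its $k$-compound $D^{(k)}$ is the $1\times 1$ matrix whose single entry is $\prod_{j=1}^k D_{jj} = \prod_{j=1}^k \lambda_j = \tilde\lambda$, using~\eqref{eq:akii}. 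Substituting yields exactly $A^{(k)}\tilde v = \tilde\lambda\, B^{(k)}\tilde v$, which is the claim.

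The only genuine subtlety — and the step I would be most careful about — is that the proposition makes no claim that $\tilde v \ne 0$, so nothing needs to be shown there; and the hypothesis does not require $(A,B)$ to be regular, so I must make sure the argument never divides by anything or invokes regularity. The computation above never does: it is a pure identity manipulation via Cauchy--Binet, valid for arbitrary $A, B$. (If $\tilde v = 0$ the conclusion holds trivially as $0 = 0$; if the $v^i$ are linearly independent then $\tilde v \ne 0$, by the rank formula~\eqref{eq:k_mul_rank} applied to $W$, and the identity then genuinely exhibits a nonzero generalized eigenvector.) I would also remark that this recovers Proposition~\ref{prop:mul_gen_eig}'s spectral statement in the finite-eigenvalue case, now with the eigenvector made explicit.
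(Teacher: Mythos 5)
Your proof is correct and follows essentially the same route as the paper: both arguments reduce the claim to the Cauchy--Binet identity applied to the matrix $W=\begin{bmatrix}v^1 & \dots & v^k\end{bmatrix}$. The only cosmetic difference is that you package the relations as $AW=BWD$ with $D=\diag(\lambda_1,\dots,\lambda_k)$ and use $D^{(k)}=\det(D)=\tilde\lambda$, whereas the paper substitutes $Av^i=\lambda_i Bv^i$ column by column and extracts $\tilde\lambda$ by multilinearity before applying Cauchy--Binet again.
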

This implies in particular that  if~$v^1,\dots,v^k$ are linearly independent eigenvectors of $(A,B)$ with corresponding eigenvalues~$\lambda_1,\dots,\lambda_k$, then~$\tilde{v} \in \R^{\binom{n}{k}}\setminus\{0\}$ is an eigenvector of~$(A,B)^{(k)}$ corresponding to the  eigenvalue~$\tilde \lambda$.

\begin{proof}
    Applying the Cauchy-Binet Theorem yields 
    \begin{align}\label{eq:pcn}
        A^{(k)} \begin{bmatrix} v^1 & \dots & v^k\end{bmatrix}^{(k)} &= \begin{bmatrix} A v^1 & \dots & A v^k \end{bmatrix}^{(k)} \nonumber\\
        &= \begin{bmatrix} \lambda_1 B v^1 & \dots & \lambda_k B v^k \end{bmatrix}^{(k)}\nonumber \\
        &= \tilde{\lambda} \begin{bmatrix} B v^1 & \dots & B v^k \end{bmatrix}^{(k)}  \\
        &= \tilde{\lambda} B^{(k)} \begin{bmatrix} v^1 & \dots &v^k  \end{bmatrix}^{(k)},\nonumber 
     \end{align}
     and this completes the proof. 
\end{proof}

\begin{Remark}
The multiplicative compound of a matrix pencil has a geometric interpretation similar to that 
of the multiplicative compound of a matrix. Let~$r:=\binom{n}{k}$. 
The dimensions of~$\tilde v$ are~$r \times \binom{k}{k}$, i.e. it is an~$r$-dimensional  column vector, and~$|\tilde v|_2$ is the volume of the 
 parallelotope with vertices~$0,v^1,\dots,v^k$. Eq.~\eqref{eq:pcn} thus  implies that
the volume of the parallelotope
generated by~$0,Av^1,\dots,Av^k$  is equal
to~$| \prod_{i=1}^k \lambda_i  |$ times the volume 
of the parallelotope generated by~$0,Bv^1,\dots,B v^k$.
Indeed, this  follows from~\eqref{eq:gen_eig}.
\end{Remark}

\section{Analysis of the $k$-multiplicative compound~DAE}\label{sec:analysiskDAE}
It is clear that the algebraic properties of the pencil~$(A,B)^{(k)}$ are closely related to the dynamical properties of~\eqref{eq:indu_ti}. The following example demonstrates this relation from the point of view of the spectral properties studied in Prop.~\ref{prop:mul_gen_eig}.

\begin{Example}\label{exa:periodic}
    Consider~\eqref{eq:DAE_TINV} with~$n=3$,
    \begin{equation*}
        A = \begin{bmatrix}
            0 & 1 & 0 \\
            -1 & 0 & 2 \\
            -1 & 0 & 1
        \end{bmatrix}, \text{ and } B = \begin{bmatrix}
            1 & 0 & 0 \\
            0 & 1 & 0 \\
            0 & 1 & 0
        \end{bmatrix}.
    \end{equation*}
    Here~$\det(A) \neq 0$, so~$A$ is regular. As  discussed 
    in Subsection~\ref{subsec:tract} below, this implies that the 2-multiplicative compound system is tractable. Furthermore, the eigenvalues of~$(A,B)$ are $\mi,-\mi,\infty$, so by Prop.~\ref{prop:mul_gen_eig} $(A,B)^{(2)}$ has eigenvalues $1,\infty,\infty$. Therefore, given any  two consistent initial conditions of the~DAE, the corresponding 
    solution of the 2-compound~DAE is constant in time. It follows from Prop.~\ref{prop:compound_dae_dt} and the geometric properties of multiplicative compounds that the corresponding solution of the 2-compound~DAE equals the area of the parallelotope defined by the solutions of the~DAE. We would therefore expect this parallelotope to have a constant area. This can be seen in Fig.~\ref{fig:periodic_sim}, that  depicts  two trajectories of the~DAE with initial conditions~$\begin{bmatrix}1&1\end{bmatrix}^T$ and~$\begin{bmatrix}1.5&0.75\end{bmatrix}^T$, and the triangles they define (with area which is half of the area of the corresponding parallelotopes). The corresponding solution of the  2-compound~DAE is $y(j) \equiv 0.75$, and this agrees with
    the fact that the area of the triangles is constant in time.
\end{Example}

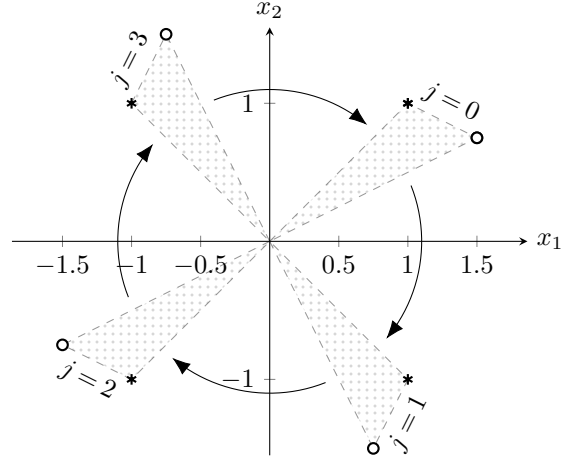
\begin{figure}
    \centering
    \begin{tikzpicture}
    \begin{axis}[
        xlabel = {$x_1$},
        ylabel = {$x_2$},
        yticklabel style = {
            /pgf/number format/fixed,
            /pgf/number format/precision=2
        },
        xmin = -1.55,
        xmax = 1.55,
        ymin = -1.55,
        ymax = 1.55,
        axis equal,
        axis lines=middle,
        every axis x label/.style={
            at={(ticklabel* cs:1)},
            anchor=west,
        },
        every axis y label/.style={
            at={(ticklabel* cs:1)},
            anchor=south,
        },
    ]
        \addplot[thick,only marks,mark=*,mark options={solid,fill=white}] table {
             1.5      0.75
             0.75    -1.5
            -1.5     -0.75
            -0.75     1.5
             1.5      0.75
        };
    
        \addplot[thick,only marks,mark=asterisk,mark options={solid,fill=white}] table {
             1     1
             1    -1
            -1    -1
            -1     1
             1     1
        };
    
        \draw [dashed, opacity=0.4, pattern=dots] (axis cs:{0,0}) -- (axis cs:{1,1}) -- (axis cs:{1.5,0.75}) -- cycle;
        \draw [dashed, opacity=0.4, pattern=dots] (axis cs:{0,0}) -- (axis cs:{0.75,-1.5}) -- (axis cs:{1,-1}) -- cycle;
        \draw [dashed, opacity=0.4, pattern=dots] (axis cs:{0,0}) -- (axis cs:{-1.5,-0.75}) -- (axis cs:{-1,-1}) -- cycle;
        \draw [dashed, opacity=0.4, pattern=dots] (axis cs:{0,0}) -- (axis cs:{-0.75,1.5}) -- (axis cs:{-1,1}) -- cycle;

        \path (axis cs:{1,1}) -- (axis cs:{1.5,0.75}) node[above, midway, sloped]{$j=0$};
        \path (axis cs:{1,-1}) -- (axis cs:{0.75,-1.5}) node[below, midway, sloped]{$j=1$};
        \path (axis cs:{-1,-1}) -- (axis cs:{-1.5,-0.75}) node[below, midway, sloped]{$j=2$};
        \path (axis cs:{-1,1}) -- (axis cs:{-0.75,1.5}) node[above, midway, sloped]{$j=3$};

        \tikzmath{\r = 1.1;};
        \addplot[-{Latex[length=7pt]},domain={atan(0.75/1.5)-5}:{-atan(1/1)+5}]({\r*cos(x)},{\r*sin(x)});
        \addplot[-{Latex[length=7pt]},domain={atan(0.75/1.5)-90-5}:{-90-atan(1/1)+5}]({\r*cos(x)},{\r*sin(x)});
        \addplot[-{Latex[length=7pt]},domain={180+atan(0.75/1.5)-5}:{180-atan(1/1)+5}]({\r*cos(x)},{\r*sin(x)});
        \addplot[-{Latex[length=7pt]},domain={90+atan(0.75/1.5)-5}:{atan(1/1)+5}]({\r*cos(x)},{\r*sin(x)});
    \end{axis}
    \end{tikzpicture}
    \caption{Trajectories of the 3-dimensional DAE in Example~\ref{exa:periodic} projected onto the 2-dimensional subspace $\text{range}(\hat{B}_0)$. One trajectory is shown with asterisks, and the other with circles. The areas defined by the two trajectories are outlined with dashed borders.}
    \label{fig:periodic_sim}
\end{figure}

\subsection{Tractability and asymptotic stability of the $k$-multiplicative compound DAE}\label{subsec:tract}
We begin by studying the tractabiliy of the $k$-multiplicative compound DAE~\eqref{eq:indu_ti}. Combining Prop.~\ref{prop:tract} and Prop.~\ref{prop:mul_regular}, we have that~\eqref{eq:indu_ti} is tractable if and only if the matrix pencil~$(A,B)^{(k)}$ is regular, or equivalently, if and only if at least one of the matrices~$A,B$ is regular. 
   
As a specific  example, consider
$    A = \begin{bmatrix}
        0&1\\0&0
    \end{bmatrix}$, and~$
    B = \begin{bmatrix}
        0&0\\1&0
    \end{bmatrix}.$
Then~$A^{(2)}=\det(A)=0$ and~$B^{(2)}=\det(B)=0$, so~$(A,B)^{(2)}=A^{(2)}-\lambda B^{(2)} = 0$. Note that in this case~$(A,B)$ is regular, and~$(A,B)^{(2)}$ is singular.

Important  dynamical  properties  of~\eqref{eq:indu_ti} follow from
combining Prop.~\ref{prop:x0_consist} and Prop.~\ref{prop:mul_gen_eig}. Suppose that the $k$-multiplicative compound DAE is tractable. Then there exists~$\lambda \in \C$ such that $\det(A^{(k)} - \lambda B^{(k)}) \neq 0$. Let
\begin{align*}
    \hat{B}_{k,\lambda}&:= (A^{(k)}-\lambda B ^{(k)} )^{-1} B^{(k)}, \\
    \hat{A}_{k,\lambda}&:= (A^{(k)}-\lambda B^{(k)}  )^{-1} A^{(k)} , 
\end{align*}
and let $i:=\ind (\hat{B}_{k,\lambda})$.
An initial condition $y(0)\in\R^{\binom{n}{k}}$ is~\emph{consistent}
iff $
y(0)$ is in the range of~$   (\hat{B}_{k,\lambda})^{i}  $,
and for such an initial condition the solution of~\eqref{eq:indu_ti} is
\begin{align}\label{eq:xAD_B_hat_indu}
    y(j ) = 
    \left ( 
    ( \hat{B}_{k,\lambda} )^{D}
    \hat{A}_{k,\lambda}
    \right )^{j}
    y(0), \quad 
    j=0,1, \dots.
\end{align}
%
Furthermore, if~$(A,B)$ has $s\geq k$ finite eigenvalues, denoted~$\lambda_i$, $i=1,\dots,s$,  then~\eqref{eq:indu_ti} is asymptotically stable iff~$
    \prod_{i=1}^k |\lambda_{\alpha_i}| < 1$  for all~$\alpha \in Q(k,s)$.

 
\subsection{Consistent and non-consistent initial conditions of the $k$-multiplicative compound DAE}
Consider the LTI DAE~\eqref{eq:DAE_TINV}. We already know that given~$k \in \{1,\dots,n\}$ solutions to~\eqref{eq:DAE_TINV},  the vector~$y(j)$  in~\eqref{eq:kcomp_y}  is a solution to the $k$-multiplicative compound DAE~\eqref{eq:indu_ti}. However,  Prop.~\ref{prop:mul_regular} implies that
under certain conditions a tractable DAE will induce a non-tractable $k$-multiplicative
compound DAE for any $k > 1$.

On the other hand, when $(A,B)^{(k)}$ is regular for all~$k \in \{1,\dots,n\}$ it is possible that for large values of~$k$ the only consistent initial condition of the~$k$-multiplicative compound system is zero. Indeed, for large~$k$,
Eq.~\eqref{eq:DAE_TINV} may not have~$k$ linearly-independent consistent initial conditions. The following results analyze these issues. 
We begin with the case where~$(A,B)^{(\ell)}$ is regular for all~$\ell \in \{1,\dots,n\}$.

\begin{Proposition}\label{prop:consistent_dim_k}
Suppose that~$(A,B)^{(\ell)}$ is regular for any~$\ell \geq  1$. Fix~$k \in \{1,\dots,n\}$. Let $\mathcal{V}^1$ denote the subspace of consistent initial conditions of~\eqref{eq:DAE_TINV}, and let~$\mathcal{V}^k$ denote the subspace of consistent initial conditions of the $k$-multiplicative compound DAE~\eqref{eq:indu_ti}. Then
\begin{equation}
    \dim(\mathcal{V}^k) = \binom{\dim(\mathcal{V}^1)}{k},
\end{equation}
where $\binom{\dim(\mathcal{V}^1)}{k}$ is defined to be zero for $k>\dim(\mathcal{V}^1)$.
\end{Proposition}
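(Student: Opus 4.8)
The plan is to use Proposition~\ref{prop:x0_consist} to get an explicit handle on both $\mathcal{V}^1$ and $\mathcal{V}^k$, and then relate the two via the Cauchy--Binet theorem together with the rank formula~\eqref{eq:k_mul_rank} for multiplicative compounds. First I would fix a $\lambda \in \C$ with $\det(A - \lambda B) \neq 0$ (this exists since $(A,B)=(A,B)^{(1)}$ is regular by assumption), and set $\hat B_\lambda := (A-\lambda B)^{-1}B$ as in~\eqref{eq:change1}, with $i := \ind(\hat B_\lambda)$. By Proposition~\ref{prop:x0_consist}, $\mathcal{V}^1 = \operatorname{range}((\hat B_\lambda)^i)$, so $\dim(\mathcal{V}^1) = \rank((\hat B_\lambda)^i)$. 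The key point is that the \emph{same} $\lambda$ works for the $k$-compound pencil: applying the Cauchy--Binet theorem to~\eqref{eq:change1} gives $(A^{(k)} - \lambda B^{(k)})^{-1} B^{(k)} = ((A-\lambda B)^{-1}B)^{(k)} = (\hat B_\lambda)^{(k)}$, i.e. $\hat B_{k,\lambda} = (\hat B_\lambda)^{(k)}$, provided $\det(A^{(k)} - \lambda B^{(k)}) \neq 0$, which holds because $(A,B)^{(k)}$ is regular and hence, by Proposition~\ref{prop:mul_regular}, at least one of $A^{(k)} - \lambda B^{(k)}$ is nonsingular for generic $\lambda$ — and in fact $\det(A^{(k)} - \lambda B^{(k)}) = \det((A - \lambda B)^{(k)}) \cdot(\text{unit})$ along the lines of~\eqref{eq:detakbk}, so the same $\lambda$ that makes $A - \lambda B$ invertible makes $A^{(k)} - \lambda B^{(k)}$ invertible.

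Next I need to identify $\ind((\hat B_\lambda)^{(k)})$ and $\rank((\hat B_\lambda)^{(k)\,j})$ in terms of $\hat B_\lambda$. Using the Jordan decomposition $\hat B_\lambda = T^{-1}\operatorname{diag}(C,N)T$ with $C$ regular of size $d$ and $N$ nilpotent of nilpotency index $i$ (so $d = \dim \mathcal{V}^1 = \rank((\hat B_\lambda)^i)$, since the powers of $N$ eventually vanish while $C$ stays invertible), I would compute $(\hat B_\lambda)^{(k)}$ via Cauchy--Binet: $(\hat B_\lambda)^{(k)} = (T^{-1})^{(k)} \operatorname{diag}(C,N)^{(k)} T^{(k)}$, and $\operatorname{diag}(C,N)^{(k)}$ is block-triangular (in a suitable basis-ordering of $Q(k,n)$) with a regular block of size $\binom{d}{k}$ coming from the $k$-fold products of columns all inside the $C$-block, and a nilpotent remainder coming from any selection touching the $N$-block. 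This makes $(\hat B_\lambda)^{(k)}$ similar to a matrix of the form $\operatorname{diag}(\text{regular}, \text{nilpotent})$, so that $\rank(((\hat B_\lambda)^{(k)})^j)$ stabilizes at $\binom{d}{k}$, i.e. $\dim(\mathcal{V}^k) = \binom{d}{k} = \binom{\dim \mathcal{V}^1}{k}$. A cleaner alternative, which I would probably prefer to write up, avoids the block-triangular bookkeeping: since $\operatorname{range}((\hat B_\lambda)^i)$ is exactly the "regular part" $\operatorname{range}((\hat B_\lambda)^D \hat B_\lambda)$ and $(\hat B_\lambda)^D$ commutes with $\hat B_\lambda$, for sufficiently large $m$ we have $\operatorname{range}(((\hat B_\lambda)^{(k)})^m) = \operatorname{range}(((\hat B_\lambda)^m)^{(k)})$, and by~\eqref{eq:k_mul_rank}, $\rank(((\hat B_\lambda)^m)^{(k)}) = \binom{\rank((\hat B_\lambda)^m)}{k} = \binom{d}{k}$ once $m \geq i$; combined with the fact that $\mathcal{V}^k = \operatorname{range}(((\hat B_\lambda)^{(k)})^{\ind((\hat B_\lambda)^{(k)})})$ and that ranks of powers are non-increasing and stabilize, this pins down $\dim(\mathcal{V}^k) = \binom{d}{k}$.

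The main obstacle I anticipate is the bookkeeping to show that $\mathcal{V}^k = \operatorname{range}(((\hat B_\lambda)^{(k)})^{i_k})$ where $i_k = \ind((\hat B_\lambda)^{(k)})$ has $\operatorname{range}(((\hat B_\lambda)^{(k)})^{i_k}) = \operatorname{range}(((\hat B_\lambda)^{m})^{(k)})$ for any $m \geq i$ — that is, that taking the compound "commutes" with extracting the stable/core subspace. This is really the statement that the core subspace of $A^{(k)}$ is the $k$-th compound (wedge power) of the core subspace of $A$, which is exactly the content proved in the Appendix (that $(A^D)^{(k)} = (A^{(k)})^D$); so I would invoke that appendix result to write $\operatorname{range}((\hat B_{k,\lambda})^{i_k}) = \operatorname{range}(((\hat B_\lambda)^{(k)})^{i_k}) = \operatorname{range}((\hat B_\lambda)^{(k)} ((\hat B_\lambda)^D)^{(k)}) = \operatorname{range}((\hat B_\lambda (\hat B_\lambda)^D)^{(k)})$, and then apply~\eqref{eq:k_mul_rank} to $\hat B_\lambda (\hat B_\lambda)^D$, whose rank is precisely $d = \dim \mathcal{V}^1$. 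This reduces the whole proposition to one application of Cauchy--Binet, the appendix identity, and the rank formula~\eqref{eq:k_mul_rank}, with the edge case $k > \dim\mathcal{V}^1$ handled automatically by the convention $\binom{d}{k} = 0$ and the fact that $(\hat B_\lambda)^{(k)}$ restricted to the core is then zero.
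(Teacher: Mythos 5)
There is a genuine gap: your central identification $\hat B_{k,\lambda} = (\hat B_\lambda)^{(k)}$ is false for general $\lambda$. Cauchy--Binet applied to \eqref{eq:change1} gives $(\hat B_\lambda)^{(k)} = \bigl((A-\lambda B)^{(k)}\bigr)^{-1} B^{(k)}$, and $(A-\lambda B)^{(k)}$ is \emph{not} $A^{(k)}-\lambda B^{(k)}$ except when $\lambda=0$; this is exactly the distinction the paper stresses in the Remark contrasting \eqref{eq:our_mat_pencil} with \eqref{eq:matrix_poly}. A concrete check: $A=\diag(2,3)$, $B=I_2$, $\lambda=1$ gives $(\hat B_1)^{(2)}=1/2$ but $\hat B_{2,1}=(6-1)^{-1}=1/5$. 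Your auxiliary claim that the same $\lambda$ works for the compound pencil, via ``$\det(A^{(k)}-\lambda B^{(k)})=\det((A-\lambda B)^{(k)})\cdot(\text{unit})$'', also fails: with $A=\diag(2,3)$, $B=I_2$, $\lambda=6$ one has $\det(A-6B)=12\neq 0$ while $A^{(2)}-6B^{(2)}=0$, because the eigenvalues of $(A,B)^{(k)}$ are $k$-fold \emph{products} of eigenvalues of $(A,B)$ and such a product can equal a $\lambda$ that is not itself an eigenvalue of $(A,B)$. Since both the choice of $\lambda$ and the identification of $\hat B_{k,\lambda}$ with $(\hat B_\lambda)^{(k)}$ are used in every variant of your argument (including the ``cleaner'' route through the Appendix identity and \eqref{eq:k_mul_rank}), the proof does not go through as written, even though all the tools you cite are individually correct.

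The statement itself is true, and the paper's proof is much shorter: $\dim(\mathcal{V}^1)=s$, the number of finite eigenvalues of $(A,B)$ counted with multiplicity, and by Prop.~\ref{prop:mul_gen_eig} the finite eigenvalues of $(A,B)^{(k)}$ are exactly the products of $k$ finite eigenvalues of $(A,B)$ (any product involving $\infty$ is $\infty$), so $(A,B)^{(k)}$ has $\binom{s}{k}$ finite eigenvalues and $\dim(\mathcal{V}^k)=\binom{s}{k}$. If you want to salvage your rank-based route, note that by Prop.~\ref{prop:mul_regular} the hypothesis forces $A$ or $B$ to be regular: if $A$ is regular you may take $\lambda=0$, in which case $\hat B_{k,0}=(A^{(k)})^{-1}B^{(k)}=(\hat B_0)^{(k)}$ really does follow from Cauchy--Binet and your computation with \eqref{eq:k_mul_rank} and the Appendix identity $(A^D)^{(k)}=(A^{(k)})^D$ then closes the argument; if instead $B$ is regular the DAE is an ordinary difference equation, $\dim(\mathcal{V}^1)=n$ and $\dim(\mathcal{V}^k)=\binom{n}{k}$ trivially. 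Without such a case split, the step equating the core subspace of $\hat B_{k,\lambda}$ with that of $(\hat B_\lambda)^{(k)}$ must be justified by the spectral correspondence of Prop.~\ref{prop:mul_gen_eig} anyway, which is precisely the paper's argument.
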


Prop.~\ref{prop:consistent_dim_k} implies in particular that~\eqref{eq:indu_ti} will have zero as its only consistent initial condition for any~$k>\dim(\mathcal{V}^1)$.

\begin{proof}
Let $s$ denote the number of finite eigenvalues of~$(A,B)$, counting multiplicities. Recall that $\dim(\mathcal{V}^1) = s$. It follows from Prop.~\ref{prop:mul_gen_eig} that $(A,B)^k$ has $\binom{s}{k}$ finite eigenvalues, and this completes the proof of Prop.~\ref{prop:consistent_dim_k}.
\end{proof}

Note that $\dim(\mathcal{V}^1) = \rank((\hat{B}_\lambda)^{\ind(\hat{B}_\lambda)}) \le \rank(B)$. Suppose that $\rank(B) < n$ and fix $k > \rank(B)$. Then $\hat{B}_{k,\lambda} = 0$, so $\dim(\mathcal{V}^k) = 0$. However, often $\dim(\mathcal{V}^1)$ is strictly smaller than $\rank(B)$, and then  there exists $k \le \rank(B)$ such that $\dim(\mathcal{V }^k) = 0$ but $\hat{B}_{k,\lambda} \neq 0$. In this case $\hat{B}_{k,\lambda}$ will be nilpotent. The next example illustrates  this.

\begin{Example}
Consider~\eqref{eq:indu_ti} with
\begin{equation*}
    A = \begin{bmatrix}
        -2 & -3 &  1 \\
         1 &  0 &  0 \\
         1 &  1 &  0
    \end{bmatrix}, \text{ and }
    B = \begin{bmatrix}
        1 & 0 & 0 \\
        0 & 1 & 0 \\
        0 & 0 & 0
    \end{bmatrix}.
\end{equation*}
Let~$\lambda = 1$, and note that $\det(A - B) \neq 0$. It may be verified that $\ind(\hat{B}_\lambda) = 2$ and $\rank(\hat{B}_\lambda^{\ind(\hat{B}_\lambda)})=1$, so $\dim(\mathcal{V}^1) =1$, and  Prop.~\ref{prop:consistent_dim_k} implies that~$\dim(\mathcal{V}^k) = 0$ for~$k=2,3$.

We now show directly that~$\dim(\mathcal{V}^2) = 0$. The~$2$-compound system  is
\be\label{eq:thyr2}
B^{(2)}y(j+1)=A^{(2)} y(j).
\ee 
 The matrix~$A^{(2)} - B^{(2)} $ is regular and multiplying~\eqref{eq:thyr2}
 on the left by~$(A^{(2)} - B^{(2)})^{-1}$ gives~$\hat{B}_{2,1}y(j+1)=\hat{A}_{2,1} y(j)$,
that is, 
\[
\begin{bmatrix}
        0 & 0 & 0 \\
        -1 & 0 & 0 \\
        1 & 0 & 0
    \end{bmatrix} y(j+1)= \begin{bmatrix}
    1&0&0\\-1&1&0\\1&0&1
\end{bmatrix}  y(j).
\]
The first equation here gives~$y_1(j)\equiv 0$. Using this in the second equation gives~$y_2(j)\equiv 0$, and now the third equation gives~$y_3(j)\equiv 0$, so indeed the only consistent initial condition is~$y(0)=0$. 
 Note that  the matrix
$
    \hat{B}_{2,1} = \begin{bmatrix}
        0 & 0 & 0 \\
        -1 & 0 & 0 \\
        1 & 0 & 0
    \end{bmatrix} 
$
  is nilpotent, as expected.
\end{Example}

We now turn to consider the case where~$(A,B)$ is regular,
but $(A,B)^{(k)}$ is singular for any~$k>1$. The following result shows that in this case the $k$-compound system will have a consistent non-zero   initial condition for any~$k$. 

\begin{Proposition}\label{prop:induced_singular}
    Let $A,B \in \R^{n \times n}$ be such that $(A,B)$ is regular and $\det(A)=\det(B)=0$. 
    Fix $k>1$. Then
    there exists a vector $z \in \R^{\binom{n}{k}} \setminus \{0\}$ such that:
    \begin{enumerate}
        \item $z$ is a consistent initial condition for the $k$-multiplicative
        compound DAE~\eqref{eq:indu_ti}; \label{item:cdro}
        \item if $y(j)$, $j=0,1,\dots$ is a solution of~\eqref{eq:indu_ti} then $y(j) + z$, $j=0,1,\dots$, is another solution of~\eqref{eq:indu_ti} for the initial condition $y(0)$. \label{item:addsol}
    \end{enumerate}
\end{Proposition}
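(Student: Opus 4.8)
The plan is to read off the required vector $z$ directly from Proposition~\ref{prop:mul_regular}. Since $(A,B)$ is regular with $\det(A)=\det(B)=0$ and $k>1$, condition~\ref{cond:A_B_singular} of that proposition holds, hence so does condition~\ref{cond:kernels_intersect}, that is, $\ker(A^{(k)})\cap\ker(B^{(k)})\neq\{0\}$. Because $A$ and $B$ are real, $A^{(k)}$ and $B^{(k)}$ are real matrices, so this common kernel contains a nonzero \emph{real} vector; I would take $z\in\R^{\binom{n}{k}}\setminus\{0\}$ with $A^{(k)}z=B^{(k)}z=0$. (Equivalently, one could take $z\in\ker(A^{(k)}-B^{(k)})$, which is nonzero since Proposition~\ref{prop:mul_regular} gives $\det(A^{(k)}-\lambda B^{(k)})\equiv 0$, in particular at $\lambda=1$; then $A^{(k)}z=B^{(k)}z$, and the verification below is unchanged.)

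With such a $z$ in hand, item~\ref{item:cdro} follows by checking that the constant sequence $y(j):=z$, $j=0,1,\dots$, solves~\eqref{eq:indu_ti}: indeed $B^{(k)}y(j+1)=B^{(k)}z=0=A^{(k)}z=A^{(k)}y(j)$ for all $j$, and this solution has $y(0)=z$, so $z$ is a consistent initial condition. For item~\ref{item:addsol}, given any solution $y(j)$ of~\eqref{eq:indu_ti}, linearity together with $B^{(k)}z=A^{(k)}z=0$ yields $B^{(k)}(y(j+1)+z)=B^{(k)}y(j+1)+0=A^{(k)}y(j)+0=A^{(k)}(y(j)+z)$ for every $j$, so $y(j)+z$ is again a solution of~\eqref{eq:indu_ti}; since $z\neq0$ it is a genuinely different one, which also exhibits directly the failure of tractability of the $k$-compound DAE in this case.

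I do not expect a real obstacle here: once Proposition~\ref{prop:mul_regular} is available, the entire content is the observation that a single nonzero vector lying in $\ker(A^{(k)})\cap\ker(B^{(k)})$ simultaneously produces a nonzero constant solution and a translation symmetry of the solution set of~\eqref{eq:indu_ti}. The only points deserving a word of care are that $z$ can be chosen real (immediate from $A,B\in\R^{n\times n}$ and the fact that compounds of real matrices are real), and, if one prefers to argue via the $\lambda=1$ slice of the pencil rather than via the common kernel, that both verifications still go through verbatim — which they do, since all that is used is the identity $A^{(k)}z=B^{(k)}z$.
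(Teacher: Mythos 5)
Your proposal is correct and follows essentially the same route as the paper's proof: invoke Proposition~\ref{prop:mul_regular} to obtain a nonzero $z\in\ker(A^{(k)})\cap\ker(B^{(k)})$, observe that the constant sequence $y(j)\equiv z$ solves~\eqref{eq:indu_ti} (giving consistency), and use linearity for the translation property. Your extra remarks — that $z$ may be taken real, and the alternative choice $z\in\ker(A^{(k)}-B^{(k)})$ at $\lambda=1$ — are correct refinements but do not change the argument, which matches the paper's.
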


\begin{proof}
Since $\det(A) = \det(B) = 0$,  Prop.~\ref{prop:mul_regular}   implies
that there exists
$z \in \ker(A^{(k)}) \cap \ker(B^{(k)})$
with~$z\not =0$.
Consider the sequence~$ {y}(j) \equiv  z$ for all $j \ge 0$. Since $B^{(k)}  {y}(j+1) = A^{(k)} {y}(j) = 0$, $y(j)$ is a solution of the~$k$-compound system and, in particular, the vector~$z$ is indeed a consistent initial condition of~\eqref{eq:indu_ti}. This proves~\eqref{item:cdro}. The proof of~\eqref{item:addsol} follows similarly, and this completes the proof of Prop.~\ref{prop:induced_singular}.
\end{proof}

 The singularity of~$(A,B)^{(k)}$ implies that the~$k$-compound system may have consistent initial conditions  and solutions that do not correspond to 
 $k$-compounds of consistent  initial  conditions and solutions of the original system.
The next example illustrates this. 
\begin{Example}\label{exa:k_comp_singular}
Consider~\eqref{eq:indu_ti} with~$A=\diag(0,1/2,1)$ and~$B=\diag(1,1,0)$,
that is,
\begin{align}\label{eq:abc} 
    x_1(j+1) &= 0, \nonumber\\
    x_2(j+1) &=  x_2(j)/2, \nonumber\\
    0 &= x_3(j).
\end{align}
Note that the pencil~$(A,B)$ is regular, but since~$\det(A)=\det(B)=0$, the pencil~$(A,B)^{(k)} $ is singular for any~$k>1$. The subspace of consistent initial conditions of~\eqref{eq:abc} is
\begin{equation}\label{eq:exa_singular_consistent}
\mathcal{V}^1=    \spanop (e^1,e^2),
\end{equation}
where~$e^i$ is the~$i$th canonical vector in~$\R^3$. 
Furthermore, given a consistent initial condition~$a \in \mathcal{V}^1$, the corresponding solution is~$x(0)=a$ and
$
    x(j) = 2^{-j} a_2 e^2 $, $ j=1,2,\dots$.

Consider now the~$2$-compound system. Since~$A^{(2)}=\diag(0,0,1/2)$ and~$B^{(2)}=\diag(1,0,0)$, the $2$-compound system is  
\begin{align}\label{eq:exa_compound_singular}
    y_1(j+1) &= 0, \nonumber \\
    0 &= 0, \nonumber \\
    0 &=  y_3(j)/2.
\end{align}
This implies that~$\mathcal{V}^2=\mathcal{V}^1$. 
For any~$a, b \in\mathcal{V}^1$, we have  
$
    \begin{bmatrix}
        a  & b 
    \end{bmatrix}^{(2)} = \begin{bmatrix}
       a_1 b_2-b_1 a_2 &
        0 &
        0
    \end{bmatrix}^T.
$
Thus, the~$2$-compound system
has consistent initial conditions that are \emph{not}  $2$-compounds of  consistent initial conditions of the original system. Furthermore,  it is easy to see that the~$2$-compound system has solutions that are \emph{not} $2$-compounds of solutions of the original system. 
\end{Example}

\section{An Application:  the set of stable initial conditions of a~DAE}\label{sec:appli}

We now use our results to analyse stable subspaces of a DAE by studying the stability of the $k$-compound DAE. This is inspired by \cite{muldowney1990compound}, which related the subspace of initial conditions of the linear time-varying ODE~$\dot x(t)=A(t)x(t)$ which lead to an asymptotically stable solution to the $k$-compound ODE. Here we rephrase this result in terms of discrete-time systems and generalize it to linear time-varying~DAEs.

We require the following definition. The  linear time-varying DAE~\eqref{eq:DAE_TINV}
is called   \emph{uniformly stable} if for any~$\epsilon > 0$ there exists~$\delta > 0$ such that for any~$j_0 \in \R$, if~$x(j_0)$ is a consistent initial condition and~$|x(j_0)| < \delta$, then~$|x(j)| < \epsilon$ for all~$j \ge j_0$.
\begin{Theorem}\label{thm:subspace}
    Suppose that the time-varying DAE~\eqref{eq:lin_dae_dt} is tractable and uniformly stable. Fix an initial time~$j_0$, and let~$\mathcal{V}^1(j_0)$ denote the set of consistent initial conditions. Fix~$k \in \{1,\dots,\dim(\mathcal{V}^1(j_0))\}$. Consider the following assertions.
    \begin{enumerate}[(a)]
    \item \label{cond:stab_k_comp}
        Every solution of the~$k$-multiplicative compound DAE~\eqref{eq:compound_dae_dt}
        satisfies~$\lim_{j \to \infty} y(j) = 0$;
    \item \label{cond:stable_subspace}
        The DAE~\eqref{eq:lin_dae_dt}  admits a subspace~$\mathcal{X}(j_0) \subseteq \mathcal{V}^1(j_0)$, with $\dim(\mathcal{X}(j_0)) = \dim(\mathcal{V}^1(j_0)) - k + 1$, such that \be\label{eq:limit_cond}
        \lim_{j \to \infty} x(j) = 0 \text{ for any } x(j_0) \in \mathcal{X}(j_0).
        \ee
    \end{enumerate}
    Then~\ref{cond:stab_k_comp} implies~\ref{cond:stable_subspace}. 
\end{Theorem}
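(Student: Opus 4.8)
The plan is to reduce everything to a statement about the singular values of a fundamental matrix. Write $d:=\dim(\mathcal V^1(j_0))$, fix a basis $a^1,\dots,a^d$ of $\mathcal V^1(j_0)$, and let $x^i(j):=x(j,a^i)$ be the corresponding (unique, by tractability) solutions of~\eqref{eq:lin_dae_dt}; set $X(j):=\begin{bmatrix}x^1(j)&\cdots&x^d(j)\end{bmatrix}\in\R^{n\times d}$. By linearity every solution of~\eqref{eq:lin_dae_dt} with a consistent initial condition at time $j_0$ equals $X(j)c$ for some $c\in\R^d$. I will use uniform stability in two ways: first, the $x^i$ are bounded, so $C:=\sup_{j\ge j_0}\|X(j)\|<\infty$; second, for a linear system uniform stability is equivalent to the existence of a constant $K$ with $\|\Phi(j,j_1)\|\le K$ for all $j\ge j_1\ge j_0$, where $\Phi(j,j_1)$ is the solution operator along consistent initial conditions (well defined for $j\ge j_1$ by tractability). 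Since $X(j_1)c$ is a consistent initial condition at time $j_1$, uniqueness yields $X(j)c=\Phi(j,j_1)\,X(j_1)c$, hence
\[
|X(j)c|\le K\,|X(j_1)c|,\qquad j\ge j_1\ge j_0,\ c\in\R^d .
\]

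\emph{Step 1: the $k$-compound forces $\sigma_k(X(j))\to0$.} For each $\alpha\in Q(k,d)$ let $X_\alpha(j)\in\R^{n\times k}$ be the submatrix of $X(j)$ with columns in $\alpha$. By Prop.~\ref{prop:compound_dae_dt}, $y_\alpha(j):=(X_\alpha(j))^{(k)}$ is a solution of the $k$-compound DAE~\eqref{eq:compound_dae_dt}, so by assertion~\ref{cond:stab_k_comp}, $y_\alpha(j)\to0$ for every $\alpha$. The entries of the vectors $y_\alpha(j)$, ranging over all $\alpha$, are exactly the $k\times k$ minors of $X(j)$, so $X(j)^{(k)}\to0$. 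Using the singular value decomposition of $X(j)$ and Cauchy--Binet (exactly as in the paper's computation of the singular values of a compound), $\|X(j)^{(k)}\|_F^2=\sum_{\alpha\in Q(k,d)}\prod_{i\in\alpha}\sigma_i(X(j))^2\ \ge\ \prod_{i=1}^k\sigma_i(X(j))^2\ \ge\ \sigma_k(X(j))^{2k}$, and therefore $\sigma_k(X(j))\to0$ as $j\to\infty$.

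\emph{Step 2: extracting a fixed stable subspace.} For $j_1\ge j_0$ let $\mathcal N(j_1)\subseteq\R^d$ be the $(d-k+1)$-dimensional span of the right singular vectors of $X(j_1)$ associated with its $d-k+1$ smallest singular values; then $|X(j_1)c|\le\sigma_k(X(j_1))\,|c|$ for $c\in\mathcal N(j_1)$, and the displayed inequality upgrades this to $|X(j)c|\le K\,\sigma_k(X(j_1))\,|c|$ for all $j\ge j_1$. Choose $j_1=j_1(m)$ with $\sigma_k(X(j_1(m)))\le1/m$. The subspaces $\mathcal N(j_1(m))$ lie in the compact Grassmannian $\mathrm{Gr}(d-k+1,d)$ (equivalently, their orthogonal projectors form a bounded sequence of rank-$(d-k+1)$ projectors), so a subsequence converges to a subspace $\mathcal N_\infty$ of dimension exactly $d-k+1$. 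For $c\in\mathcal N_\infty$, write $c=c^{(m)}+(c-c^{(m)})$ with $c^{(m)}$ the orthogonal projection of $c$ onto $\mathcal N(j_1(m))$; then for $j\ge j_1(m)$,
\[
|X(j)c|\le|X(j)c^{(m)}|+|X(j)(c-c^{(m)})|\le \tfrac{K}{m}\,|c|+C\,|c-c^{(m)}|,
\]
and since $|c-c^{(m)}|\to0$ along the subsequence, this shows $X(j)c\to0$ for every $c\in\mathcal N_\infty$. Setting $\mathcal X(j_0):=\{X(j_0)c:c\in\mathcal N_\infty\}$ (which has dimension $d-k+1$ because the columns of $X(j_0)$ are a basis of $\mathcal V^1(j_0)$, hence $X(j_0)$ is injective) we obtain $\mathcal X(j_0)\subseteq\mathcal V^1(j_0)$ with $\lim_{j\to\infty}x(j)=0$ for every $x(j_0)\in\mathcal X(j_0)$, which is assertion~\ref{cond:stable_subspace}.

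The main obstacle is Step 2: the singular subspace $\mathcal N(j_1)$ may rotate as $j_1\to\infty$, so one cannot simply intersect the $\mathcal N(j_1)$; compactness of the Grassmannian is what pins down a single limiting subspace, and it is precisely \emph{uniform} stability (not just stability from the fixed time $j_0$) that lets the smallness of $|X(j_1)c|$ at one time $j_1$ propagate to all later $j$ with a $j_1$-independent constant — without it, as simple time-varying examples show, no fixed stable subspace of the required dimension need exist. The remaining bookkeeping (consistent initial condition subspaces vary with $j$, and the forward flow need not be injective for $j>j_0$) is harmless: beyond $j_0$ only the uniform bound $\|\Phi(j,j_1)\|\le K$ and uniqueness of solutions are used.
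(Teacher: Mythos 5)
Your proof is correct, and it reaches assertion~\ref{cond:stable_subspace} by a genuinely different route in how the stable subspace is produced. The paper fixes an \emph{arbitrary} choice of $k$ linearly independent consistent initial conditions $a^1,\dots,a^k$, uses boundedness (from uniform stability) to extract a convergent subsequence $X(j_i)\to P$, observes that assertion~\ref{cond:stab_k_comp} forces $P^{(k)}=0$, hence $\rank(P)<k$ by~\eqref{eq:k_mul_rank}, so some nonzero combination $\sum_\ell c_\ell a^\ell$ satisfies $x(j_i,\sum_\ell c_\ell a^\ell)\to 0$, which uniform stability upgrades to convergence of the whole trajectory; since the $a^\ell$ were arbitrary, a dimension count then bounds the set of non-stable consistent initial conditions by dimension $k-1$. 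You instead take a full basis of $\mathcal{V}^1(j_0)$, prove the quantitative statement $\sigma_k(X(j))\to 0$ via the Frobenius norm of $X^{(k)}(j)$, and build $\mathcal{X}(j_0)$ explicitly as the image under $X(j_0)$ of a Grassmannian limit of the trailing singular subspaces, using the uniform bound on the solution operator (which follows from uniform stability plus linearity and tractability; note you only need that implication, not the claimed equivalence) to propagate smallness at time $j_1(m)$ to all later times. Both arguments share the same engine --- Prop.~\ref{prop:compound_dae_dt}, the singular-value/rank property of compounds, a compactness extraction, and uniform stability --- but the paper's contradiction-plus-dimension-count avoids the Grassmannian machinery and is shorter, while your version is more constructive: it produces the entire $(\dim(\mathcal{V}^1(j_0))-k+1)$-dimensional stable subspace in one step and isolates the quantitative fact $\sigma_k(X(j))\to0$, which may be of independent use.
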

\begin{proof}
    Suppose that every solution of~\eqref{eq:compound_dae_dt} satisfies
    ${\lim_{j \to \infty} y(j) = 0}$. Pick~$k$ vectors $a^1,\dots,a^k \in \mathcal{V}^1$. 
    Define~$X(j) := \begin{bmatrix} x(j,a^1) & \dots & x(j,a^k) \end{bmatrix}$.  Since uniform stability implies that all trajectories are bounded (by a constant which depends on the initial condition), there exists an increasing sequence of times~$j_i$ such that $\lim_{i \to \infty} j_i = \infty$ and~$P := \lim_{i \to \infty} X(j_i)$ exists. By Prop.~\ref{prop:compound_dae_dt},
    \begin{equation}
        B^{(k)}(j+1) X^{(k)}(j+1) = A^{(k)}(j) X^{(k)}(j),
    \end{equation}
    so~$P^{(k)} = 0$, and by~\eqref{eq:k_mul_rank}~$P$ has non-full rank, so there exists~$c \in \R^{k}\setminus\{0\}$ such that~$Pc=0$, that is, 
    \begin{align*}
        0  &= \lim_{i \to \infty} \sum_{\ell=1}^k c_\ell x(j_i, a^\ell) \\
          &= \lim_{i \to \infty} x(j_i, \sum_{\ell=1}^k c_\ell a^\ell)\\
          &= \lim_{j \to \infty} x(j, \sum_{\ell=1}^k c_\ell a^\ell),
    \end{align*}
    where the last step follows from uniform stability. Note that we may choose~$a^1,\dots,a^k$ arbitrarily as long as they are consistent initial conditions. In particular, we may choose them to be linearly independent. Therefore, the subspace of consistent initial conditions which yield trajectories which are not asymptotically stable has dimension of at most $k-1$, and this completes the proof.
\end{proof}

Note that unlike the case for ODEs, the existence of a stable set of initial conditions with dimension $\dim(\mathcal{V}^1) - k + 1$ does not imply that the $k$-compound system is asymptotically stable. This is due to the fact that the $k$-compound system might have solutions which do not correspond to compounds of solutions of the original system.
\begin{Example}
    Consider again the system from Example~\ref{exa:k_comp_singular}. It is easy to verify that all solutions converge to the origin asymptotically, so the system is uniformly stable, and we may take~$\mathcal{X} = \mathcal{V}^1$. However, Prop.~\ref{prop:induced_singular} implies that the~$k$-compound system will have a constant non-zero solution, so it is not asymptotically stable.
\end{Example}

\subsection{Application to a 3D Leslie model}
We describe an  application of Thm.~\ref{thm:subspace}
to the Leslie model from mathematical demography (see, e.g.,~\cite[Ch.~22]{math_eco_Kot}).
Consider the system: 
\be\label{eq:zles}
z(j+1)=L  z(j),
\ee
with~$z\in\R^3$ and~$L=\begin{bmatrix}
    b_1 & b_2 & 0 \\
    p_1 & 0   & 0 \\
    0   & p_2 & 0
\end{bmatrix}$. The parameters~$b_i>0$ [$p_i>0$] represent 
age-class fertilities [age-class
survival probabilities].
 Suppose that we can measure the population
at the current time, denoted~$\ell$, 
and we are interested in projecting the  population dynamics backwards in time~\cite[Ch.~9]{Campbell_and_Meyer}. Letting~$x(0):=z(\ell)$, $x(1):=z(\ell-1) $, and so on, 
gives~$Lx(1)=Lz(\ell-1)=z(\ell)=x(0)$, so we are led to consider   
 the~DAE 
\be\label{eq:lde_leslie}
Lx(j+1)=x(j). 
\ee 
The corresponding matrix pencil~$( I_3,L)$  is regular, so Prop.~\ref{prop:tract} implies that~\eqref{eq:lde_leslie} is tractable. 
We have~$\ind(L)=1$, and applying 
Prop.~\ref{prop:x0_consist} with~$\lambda=0$ implies that the set of consistent initial conditions is~$\mathcal{V}^1=\spanop\{\begin{bmatrix} b_1&p_1&0
 \end{bmatrix}^T,\begin{bmatrix} b_2&0&p_2
 \end{bmatrix}^T
\}$, 
and for any~$x(0)\in \mathcal{V}^1 $ the unique 
solution of~\eqref{eq:lde_leslie}  is 
\begin{align}\label{eq:LDE_LESL_SOL} 
    x(j ) = (L^D)^j 
    x(0), \quad 
    j=0,1, \dots.
\end{align}
A calculation gives 
$L^D= \begin{bmatrix}
0&p_1^{-1} &0\\
b_2^{-1} & -b_1b_2^{-1}p_1^{-1} & 0\\
-b_1 p_2 b_2^{-2}p_1^{-1} & c&0
\end{bmatrix}$,
with~$c:=  \frac{b_1^2p_2}{b_2^2p_1^2}+\frac{p_2}{b_2p_1} $.

The~$2$-compound system~\eqref{eq:indu_ti} is~$L^{(2)}y(j+1)=y(j)$, with
$
L^{(2)} = \begin{bmatrix}
     -b_2 p_1 & 0 & 0 \\
      b_1 p_2 & 0 & 0 \\
      p_1 p_2 & 0 & 0 
\end{bmatrix} .
$
Note in particular that the $2$-compound system has a simpler structure than the original system, and is therefore easier to analyse. Note also that this simple structure would still hold even if the system were time-varying, i.e. if~$b_i,p_i$ vary with the time~$j$.

Assume that $b_2 p_1 > 1$.
Then condition~\ref{cond:stab_k_comp} holds, so 
Thm.~\ref{thm:subspace}
implies that the
DAE~\eqref{eq:lde_leslie} admits a one-dimensional set of initial conditions~$\mathcal{X}$
such that~\eqref{eq:limit_cond} holds. Indeed, it can be shown that one of the finite eigenvalues of~$( I_3, L )$ (or equivalently of $L^D$) is $2/(b_1 + \sqrt{b_1^2 + 4b_2p_1}) $, which is smaller than one when $b_2 p_1 > 1$. Therefore,~\eqref{eq:lde_leslie} has a one-dimensional subspace of initial conditions for which the corresponding solutions converge to the origin. It can also be shown that this stable eigenvalue has a corresponding eigenvector with positive entries, consistent with the fact that we expect the state-variables (that represent
populations) to be non-negative.
 To explain this result, note that~$b_2 p_1$ is the ratio of current children to future offspring: $p_1$ describes the proportion of children which survive to the second age group, and~$b_2$ describes the fertility of the surviving individuals. Intuitively, the condition~$b_2 p_1 > 1$  implies that the population in~\eqref{eq:zles} grows with (forward) time, so the population in~\eqref{eq:lde_leslie} decreases.
 
Fig.~\ref{fig:leslie_sim} shows two trajectories of~\eqref{eq:lde_leslie} with the parameters $p_1 = 0.9, p_2 = 0.7, b_1 = 1.1$ and $b_2 = 2.3$ (so~$b_2 p_1 > 1$), projected for simplicity onto the 2-dimensional space $\mathcal{V}^1$ using the orthogonal projection matrix  derived by applying the Gram-Schmidt process. The figure shows convergence along one direction (clearly demonstrated by the trajectory with circle markers), and diverging oscillations along a different direction. Fig.~\ref{fig:leslie_sim} also shows the corresponding solution to the 2-compound system, which converges to the origin asymptotically.

\begin{figure}
    \centering
    \begin{tikzpicture}[scale=0.475]
    \begin{groupplot}[group style={rows=1,columns=2,horizontal sep=1.8cm},scale only axis,width=0.8\columnwidth]
        \nextgroupplot[
            xlabel = {$(Px)_1$},
            ylabel = {$(Px)_2$},
            label style={font=\Large},
            tick label style={font=\large},
            yticklabel style = {
                /pgf/number format/fixed,
                /pgf/number format/precision=2
            },
            xmin = 0,
        ]
            \addplot[thick,mark=asterisk] coordinates {(0.4765,0.1515)};
            \addplot[thick,dashed,mark=none] table {
                0.4765    0.1515
                0.2280    0.0725
            };
            \addplot[thick,dashed,mark=*,mark options={solid,fill=white}] table {
                0.2280    0.0725
                0.1091    0.0347
                0.0522    0.0166
                0.0250    0.0079
            };
        
            \addplot[thick,mark=asterisk] coordinates {(0.5,0.14)};
            \addplot[thick,dotted,mark=none] table {
                0.5000    0.1400
                0.2432    0.0965
            };
            \addplot[thick,dotted,mark=diamond*,mark options={solid,fill=black},mark size=3pt] table {
                0.2432    0.0965
                0.1123    0.0163
                0.0578    0.0380
                0.0236   -0.0123
            };
        \nextgroupplot[
            xmin = 1,
            xmax = 5,
            ymin = 0,
            xlabel = {$j$},
            ylabel = {$P^{(2)}y(j)$},
            label style={font=\Large},
            tick label style={font=\large},
            yticklabel style = {
                /pgf/number format/fixed,
                /pgf/number format/precision=2
            },
            ylabel shift = -3pt
        ]
            \addplot[thick,dashed,mark=square*,mark options={solid,fill=white},mark size=3pt] table {
                1 0.0091
                2 0.0044
                3 0.0021
                4 0.0010
                5 0.0005
            };
    \end{groupplot}
    \end{tikzpicture}
    \caption{\textbf{Left:} two trajectories of the Leslie model~\eqref{eq:lde_leslie}, projected onto the 2-dimensional space $\mathcal{V}^1$ using a projection matrix~$P$. Initial values are shown with asterisks ($*$). \textbf{Right:} corresponding solution of the 2-compound system, projected onto the one-dimensional space $\mathcal{V}^2$ (see the definition  in Prop.~\ref{prop:consistent_dim_k}). }
    \label{fig:leslie_sim}
\end{figure}
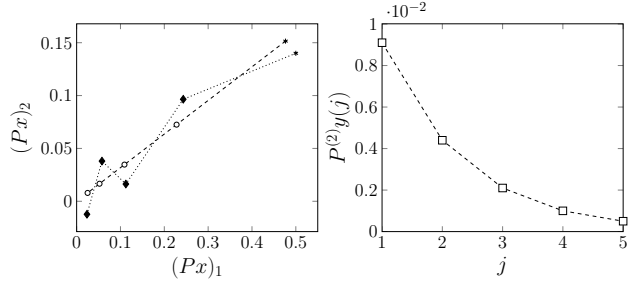

\section{Conclusion}
Given square matrices~$A,B$,
we defined the  $k$-multiplicative compound of the matrix pencil~$(A,B)$. This is a matrix pencil, denoted~$(A,B)^{(k)}$, that for~$k=1$   reduces to~$(A,B)$.
We studied the relation between~$(A,B)$ and~$(A,B)^{(k)}$ and illustrated several applications to~DAEs.
In particular, we showed that the 
DAE corresponding to~$(A,B)^{(k)}$
describes the evolution of~$k$-parallelotopes in the DAE corresponding to~$(A,B)$.

An interesting  line of research is defining also the~\emph{$k$-additive} compound of a matrix pencil, and using it to analyze differential-algebraic equations. 

The~$k$-compounds of a matrix have been recently used to define non-trivial generalizations of several classes  of both continuous-time and discrete-time 
dynamical systems including $k$-positive linear systems and 
$k$-cooperative nonlinear systems~\citep{Eyal_k_posi}, $k$-contracting systems~\citep{kordercont}, 
$k$-diagonally stable systems~\citep{cheng_diag_stab},  
and more. Another possible research  direction 
 is to use the compounds of matrix pencils to define   such generalizations for   difference-algebraic and differential-algebraic systems.

\section*{Appendix: Drazin inverse of the $k$-multiplicative compound}
Eq.~\eqref{eq:xAD_B_hat_indu} includes  the Drazin inverse of the $k$-multiplicative compound of a matrix. The next result shows that this is equal to the  $k$-multiplicative of the Drazin inverse of the original matrix. 
\begin{Proposition}
\label{thm:k_mul_drazin}
    Let $A \in \C^{n \times n}$, and fix $k \in \{1,\dots,n\}$. 
    Then~$ (A^{(k)})^D = (A^D)^{(k)}$.
\end{Proposition}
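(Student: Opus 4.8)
The plan is to reduce everything to the Jordan-type decomposition~\eqref{eq:jordan_A} and then apply the Cauchy--Binet Theorem~\eqref{eq:cb} together with the block structure of the compound of a block-diagonal matrix. First I would write $A = T^{-1} \diag(C,N) T$ with $C$ regular and $N$ nilpotent, so that $A^D = T^{-1}\diag(C^{-1},0) T$ by the formula recalled just before the statement. Applying the $k$-compound and Cauchy--Binet to both expressions gives $A^{(k)} = (T^{-1})^{(k)} \left(\diag(C,N)\right)^{(k)} T^{(k)}$ and $(A^D)^{(k)} = (T^{-1})^{(k)} \left(\diag(C^{-1},0)\right)^{(k)} T^{(k)}$, and since $T$ is regular, $(T^{(k)})^{-1} = (T^{-1})^{(k)}$. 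So the whole problem is conjugation-invariant and it suffices to prove the identity for the block-diagonal matrix $M := \diag(C,N)$, i.e. to show $(M^{(k)})^D = (\diag(C^{-1},0))^{(k)}$.

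Next I would analyze the compound of $M = \diag(C,N)$. The $k$-minors of a block-diagonal matrix vanish unless the chosen rows and columns are ``block-aligned,'' and a nonzero $k$-minor factors as a product of a minor from the $C$-block of some size $p$ and a minor from the $N$-block of size $k-p$. Grouping the index set $Q(k,n)$ accordingly, $M^{(k)}$ is (up to a fixed permutation of the index set, i.e. up to conjugation by a permutation matrix, which is unitary hence harmless) block-diagonal with blocks of the form $C^{(p)} \otimes N^{(k-p)}$ for $p = 0,\dots,k$ — more precisely it is a direct sum indexed by how the $k$ chosen indices split between the two blocks. The key facts I need are: $C^{(p)}$ is regular for every $p$ (since $C$ is regular, by the discussion after~\eqref{eq:cb}), and $N^{(q)}$ is nilpotent for every $q \ge 1$ (since $N$ is nilpotent its eigenvalues are all $0$, so the eigenvalues of $N^{(q)}$, being products of $q$ eigenvalues of $N$, are all $0$; alternatively use~\eqref{eq:k_mul_rank} and $N^m = 0 \Rightarrow (N^m)^{(q)} = (N^{(q)})^m = 0$). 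Hence each block $C^{(p)} \otimes N^{(k-p)}$ with $k-p \ge 1$ is a product of a regular and a nilpotent matrix that commute, so it is nilpotent (its eigenvalues are products of a nonzero eigenvalue of $C^{(p)}$ with the zero eigenvalue of $N^{(k-p)}$), while the single block with $p = k$ is exactly $C^{(k)}$, which is regular.

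Therefore $M^{(k)}$ is, after a harmless unitary conjugation, block-diagonal of the form $\diag(C^{(k)}, \tilde N)$ with $C^{(k)}$ regular and $\tilde N$ nilpotent; this is precisely a decomposition of the type~\eqref{eq:jordan_A} for the matrix $M^{(k)}$, so the recalled Drazin-inverse formula yields $(M^{(k)})^D = \diag((C^{(k)})^{-1}, 0) = \diag((C^{-1})^{(k)}, 0)$, using that $(C^{(k)})^{-1} = (C^{-1})^{(k)}$. On the other hand the same block bookkeeping applied to $\diag(C^{-1},0)$ shows that $(\diag(C^{-1},0))^{(k)}$ is block-diagonal with the $p=k$ block equal to $(C^{-1})^{(k)}$ and every block involving at least one index from the zero block equal to $0$; so $(\diag(C^{-1},0))^{(k)} = \diag((C^{-1})^{(k)},0)$ as well. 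Comparing the two gives $(M^{(k)})^D = (\diag(C^{-1},0))^{(k)}$, and conjugating back by $T^{(k)}$ and using uniqueness of the Drazin inverse completes the proof.

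The main obstacle is the bookkeeping in the second step: making precise the claim that the compound of a block-diagonal matrix is, up to a permutation of the rows/columns of the compound, again block-diagonal with the stated tensor-product blocks, and checking that the permutation is the same one for $M$, for $M^{(k)}$'s nilpotent part, and for $\diag(C^{-1},0)$, so that everything lines up. An alternative that sidesteps the explicit permutation is to verify the three defining properties of the Drazin inverse in Definition~\ref{def:draz_inv} directly for $X := (A^D)^{(k)}$ and $A^{(k)}$: property~(2), $A^{(k)} X = X A^{(k)}$, is immediate from $AA^D = A^D A$ and Cauchy--Binet; property~(3), $X A^{(k)} X = X$, likewise from $A^D A A^D = A^D$; and for property~(1) one needs $\ind(A^{(k)}) \le \ind(A)$ plus the identity $(A^D)^{(k)} (A^{(k)})^{\ell+1} = (A^{(k)})^{\ell}$ for $\ell = \ind(A)$, which again follows from Cauchy--Binet applied to $A^D A^{\ell+1} = A^{\ell}$. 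Either route works; I would present the block-decomposition argument as the main one since it is the most transparent, and the bound $\ind(A^{(k)}) \le \ind(A)$ it implicitly uses follows from the rank formula~\eqref{eq:k_mul_rank} applied to the powers of $A$.
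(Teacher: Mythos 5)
Your main argument is correct, but it takes a genuinely different route from the paper. The paper's proof is essentially your ``alternative'': it applies Cauchy--Binet to the three identities defining $A^D$ to get the corresponding identities for $(A^D)^{(k)}$ and $E:=A^{(k)}$, and then resolves the only remaining issue --- that the exponent in property (1) of Definition~\ref{def:draz_inv} must be $\ind(E)$, not $\ind(A)$ --- by a case split: if $k>s:=\rank(A^{\ind(A)})$ then $(A^D)^{(k)}=0$ and $E$ is nilpotent, so the identity is trivial, while if $k\le s$ the rank formula \eqref{eq:k_mul_rank} applied to powers of $A$ gives $\ind(E)=\ind(A)$ exactly. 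Your main route instead computes the core--nilpotent structure of $M^{(k)}$ for $M=\diag(C,N)$: up to a permutation similarity of the index set $Q(k,n)$, $M^{(k)}$ is the direct sum of blocks $C^{(p)}\otimes N^{(k-p)}$, all nilpotent except the $p=k$ block $C^{(k)}$, from which $(M^{(k)})^D$ can be read off and matched against $(\diag(C^{-1},0))^{(k)}$. This buys an explicit description of $(A^{(k)})^D$ and avoids any discussion of $\ind(A^{(k)})$, at the price of the combinatorial bookkeeping you flag (which does go through: within each fixed value of $p$ the lexicographic order on $Q(k,n)$ agrees with the Kronecker ordering, and the grouping permutation depends only on the index sets, hence is the same for $M^{(k)}$ and $(\diag(C^{-1},0))^{(k)}$). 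Two small points to make explicit if you write this up: (i) the $p=k$ block exists only when $k\le s$; for $k>s$ all blocks are nilpotent, so $(M^{(k)})^D=0=(\diag(C^{-1},0))^{(k)}$ --- your argument covers this, but the sentence ``$M^{(k)}$ has the form $\diag(C^{(k)},\tilde N)$'' tacitly assumes $k\le s$; (ii) in the alternative route, Definition~\ref{def:draz_inv} asks for the exponent $\ind(A^{(k)})$ precisely, so the bound $\ind(A^{(k)})\le\ind(A)$ together with the identity at exponent $\ind(A)$ suffices only if you also invoke the standard fact that the defining equations with any exponent at least the index characterize the Drazin inverse (e.g.\ via idempotency of $I-E(A^D)^{(k)}$); the paper sidesteps this by proving index equality in the nondegenerate case.
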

\begin{proof}
Denote~$i:=\ind(A)$, $j:=\ind(A^{(k)})$,  and~$E:=A^{(k)}$. We need to show that~$E^D=  (A^D)^{(k)} $.
Since~$A^D$ is the Drazin inverse of~$A$, we have 
\begin{align}\label{eq:pout}
 A^{i+1}A^D=A^i, \;
 A A^D = A^D A, \; 
 A^D A A^D  = A^D.
 \end{align} 
Taking the~$k$-multiplicative compounds of these equations and using the 
Cauchy-Binet Theorem gives
\begin{align}\label{eq:rtp}
    E^{i+1} (A^D)^{(k)} &= E^i, \\
    E (A^D) ^{(k)} &=(A^D) ^{(k)}  E,\nonumber\\
    (A^D)^{(k)}  E  (A^D)^{(k)}  &=( A^D)^{(k)}\nonumber.
\end{align}
Thus, $(A^D) ^{(k)}$ satisfies two of the requirements for the Drazin inverse of~$E$, and we only  need to show that
\be\label{eq:enou}
 E^{j+1} (A^D)^{(k)} = E^j . 
\ee
Let~$A$ have the Jordan decomposition in~\eqref{eq:jordan_A}. Then the index of nilpotency of~$N$ is also~$i$, that is,~$i$ is the minimal integer such that~$N^i=0$. 
We prove the proposition when~$T=I$, so~$A=\diag(C,N)  $,  
 $A^i=\diag(
C^i,0)$,
and~$A^D=\diag(C^{-1}, 0) $.  The proof in the more general case is very similar. 
Denote the dimension of~$C$ by~$s$. Then~$\rank(A^i)=s$.
We consider two cases. 

\noindent \emph{Case 1:} Assume that~$k>s$. 
Then every~$(k\times k)$-submatrix of~$A^D$ includes either a column of zeros or a row of zeros, so~$(A^D)^{(k)}=0$. Also, since every eigenvalue of~$A^{(k)}$
is the product of~$k$ eigenvalues of~$A$,     every eigenvalue of~$E$ is zero. Thus,~$E$ is nilpotent, so~$E^j=0$. We conclude  that~\eqref{eq:enou} holds. 

\noindent \emph{Case 2:} Assume that~$k\leq s$. 
We will show that in this case~$j=i$. Fix an integer~$\ell\geq 0 $. Then 
 \begin{align*}
     \rank(E^{\ell+1})&= \rank( (A^{(k)})^{\ell+1} )\\
     &=\rank( (A^{\ell+1} )^{(k)} ) \\
     &= \binom{\rank ( A^{\ell+1}) }{k},
 \end{align*}
where the last equation follows from~\eqref{eq:k_mul_rank}. Combining this with the definition of~$i$ gives
 \begin{align}\label{eq:ertp}
\rank(E^{i+1})&=\binom{\rank ( A^{i+1}) }{k}\nonumber\\
&=\binom{\rank ( A^{i }) }{k}\nonumber\\
&=\rank(E^{i}).
 \end{align}
Also, for any~$p<i$ we have~$A^p=\begin{bmatrix}
C^p&0\\0&N^p
\end{bmatrix}$, with~$C \in \C^{s\times s}$ and~$N^p\not =0$, and combining this with~$k\leq s$ gives
 \begin{align}\label{eq:pp1}
\rank(E^{p+1})&=\binom{\rank ( A^{p+1}) }{k}\nonumber\\
&<\binom{\rank ( A^{p }) }{k}\nonumber\\
&=\rank(E^{p}).
 \end{align}
Combining~\eqref{eq:ertp} and~\eqref{eq:pp1} proves that~$j=i$, and thus~\eqref{eq:rtp} implies~\eqref{eq:enou}. This completes the proof of Prop.~\ref{thm:k_mul_drazin}.
\end{proof}

\begin{Example}
Let~$
A=\diag(a_1,\dots,a_s,0,\dots,0)\in\R^{n\times n},
$
with~$a_i\not =0$. Fix~$k\leq s$. Then on the one-hand~$A^{(k)}=\diag(  \prod_{i=1}^k a_i, \dots,\prod_{i=s-k+1}^s a_i, 0,\dots,0  )$, so
$
(A^{(k)})^D=\diag(  \prod_{i=1}^k a_i^{-1}, \dots,\prod_{i=s-k+1}^s a_i^{-1}, 0,\dots,0  ).
$
On the other-hand, $A^D=\diag(a_1^{-1},\dots,a_s^{-1},0,\dots,0)$
and thus
\[
(A^D)^{(k)}=\diag\left(  \prod_{i=1}^k a_i^{-1}, \dots,\prod_{i=s-k+1}^s a_i^{-1}, 0,\dots,0  \right),
\]
so~$(A^D)^{(k)}=(A^{(k)})^D$.
\end{Example}

Note that if~$A$ is regular then~$A^D=A^{-1}$,
so Prop.~\ref{thm:k_mul_drazin} reduces to  the well-known relation~$ 
        (A^{(k)})^{-1} = (A^{-1})^{(k)}$.

  \section*{Acknowledgements}
  We are grateful to E. D. Sontag for helpful comments.  We thank the anonymous reviewers 
  and the editor for many helpful comments.

\bibliographystyle{abbrvnat}
\bibliography{literature}

 \subsection*{  } 
    \setlength\intextsep{0pt} 
    \noindent \textbf{Ron Ofir} 
    received his BSc degree (cum laude) in Elec.  Eng. from the Technion-Israel Institute of Technology in~2019. He is currently pursuing his Ph.D. degree in the  Andrew and Erna Viterbi Faculty of Electrical and Computer Engineering, Technion - Israel Institute of Technology. His current research interests include compound matrices and
 contraction theory with
 applications in  dynamics and control of power systems.

 \subsection*{  } 
    \setlength\intextsep{0pt} 
    \begin{wrapfigure}{l}{0.13\textwidth}
        \centering
        \includegraphics[width=0.15\textwidth]{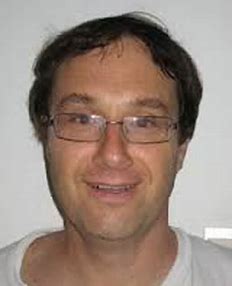}
    \end{wrapfigure}
    \noindent \textbf{Michael Margaliot}  received the BSc (cum laude) and MSc degrees in
 Elec. Eng. from the Technion-Israel Institute of Technology-in
 1992 and 1995, respectively, and the PhD degree (summa cum laude) from Tel
 Aviv University in 1999. He was a post-doctoral fellow in the Dept. of
 Theoretical Math. at the Weizmann Institute of Science. In 2000, he
 joined the Dept. of Elec. Eng.-Systems, Tel Aviv University,
 where he is currently a Professor. His  research
 interests include the stability analysis of differential inclusions and
 switched systems, optimal control theory, computation with
 words, Boolean control networks, contraction theory,  applications of matrix compounds in systems and control theory, and systems biology.
 He is co-author of \emph{New Approaches to Fuzzy Modeling and Control: Design and
 Analysis}, World Scientific,~2000 and of \emph{Knowledge-Based Neurocomputing}, Springer,~2009. 
 He  served
as  an Associate Editor of~\emph{IEEE Trans. on Automatic Control} during 2015-2017.

\end{document}